\documentclass[11pt]{article}

\usepackage[margin=1in]{geometry}
\usepackage[T1]{fontenc}
\usepackage[utf8]{inputenc}
\usepackage{lmodern}
\usepackage{amsmath,amssymb,amsfonts,amsthm,mathtools}
\usepackage{bm}
\usepackage{enumitem}
\usepackage{booktabs}
\usepackage[authoryear,round]{natbib}
\usepackage{hyperref}
\usepackage{xcolor}
\usepackage{graphicx}

\hypersetup{
    colorlinks=true,
    linkcolor=blue,
    citecolor=blue,
    urlcolor=blue
}

\newtheorem{assumption}{Assumption}
\newtheorem{definition}{Definition}
\newtheorem{proposition}{Proposition}
\newtheorem{theorem}{Theorem}

\newtheorem{corollary}{Corollary}
\newtheorem{example}{Example}
\newtheorem{remark}{Remark}

\newcommand{\R}{\mathbb{R}}
\newcommand{\Z}{\mathcal{Z}}
\newcommand{\X}{\mathcal{X}}
\newcommand{\A}{\mathcal{A}}
\newcommand{\V}{\mathcal{V}}
\newcommand{\Mcal}{\mathcal{M}_{\mathrm{cal}}}
\newcommand{\Mbf}{\mathcal{M}_{\mathrm{bf}}}
\newcommand{\M}{\mathcal{M}}
\newcommand{\Bop}{\mathcal{B}}
\newcommand{\eps}{\varepsilon}
\newcommand{\norm}[1]{\left\lVert #1 \right\rVert}
\newcommand{\abs}[1]{\left\lvert #1 \right\rvert}
\newcommand{\dd}{\mathrm{d}}

\title{Latent-Space No-Arbitrage Geometry of Generative Models for Implied Volatility Surfaces}

\author{Jing Wang, Shuaiqiang Liu\thanks{Email: shuaiqiangliu@tudelft.nl}, and Cornelis Vuik \\ \\Delft Institute of Applied Mathematics, Delft University of Technology\\}

\date{}

\begin{document}

\maketitle

\begin{abstract}
Generative models for implied volatility surfaces must produce outputs that satisfy static no-arbitrage constraints. We study these constraints in latent space. For a fixed generator, we assign each latent code a scalar margin determined by the no-arbitrage conditions of the generated surface. The codes with nonnegative margin form the admissible latent set. We establish conditions under which strictly admissible codes remain admissible under small perturbations and the boundary of the admissible set is characterized by zero margin. For regular boundary components, we formulate a level-set equation whose local dynamics are directed toward the zero-margin set. The analysis treats the generator as a map from latent variables to surfaces and is therefore not restricted to a particular architecture. It applies to variational autoencoders, generative adversarial networks, and other generative models with a deterministic realization map. Numerical tests recover known boundaries in analytic examples. Experiments with a variational autoencoder trained on Heston surfaces show that similar reconstruction errors can correspond to different admissible regions and that the latent prior may be concentrated inside such a region. The computed boundary can also be used to modify latent codes that generate violating surfaces.
\end{abstract}

\noindent\textbf{Keywords.} Implied volatility surface; no-arbitrage constraints; generative models; latent space; level-set method; Hamilton--Jacobi equation.

\section{Introduction}

Generative models are used to construct financial objects such as return scenarios, yield curves, option prices, and implied volatility surfaces \citep{Ning2023,VuleticCont2025,WangLiuVuik2025}, alongside econometric and stochastic-volatility models \citep{AitSahaliaJacod2014,Heston1993,BayerFrizGatheral2016}. For implied volatility surfaces, distributional fit alone is insufficient. A generated surface that violates static no-arbitrage conditions cannot be used consistently for pricing and hedging. Since a generator maps latent codes to entire surfaces, some parts of its latent space may produce admissible surfaces while others produce violations. We study how these two regions are separated.

Existing methods impose no-arbitrage conditions during model construction or correct violations after generation. These include admissible parameterizations \citep{ContVuletic2023,Ning2023}, constrained reconstruction maps \citep{ZhangLiZhang2023}, and penalty terms used in neural smoothing, GANs, diffusion models, and flow-matching models \citep{AckererTagasovskaVatter2020,NaZhangWan2023,VuleticCont2025,BelmonteCole2025,JinAgarwal2025,BrooksBajalicaLiuBenTahar2026}. Latent-space correction provides another possibility \citep{WangLiuVuik2025}. We keep the trained generator fixed and study the set of latent codes for which its output satisfies the prescribed no-arbitrage conditions.

\begin{figure}[!b]
    \centering
    \includegraphics[width=\textwidth]{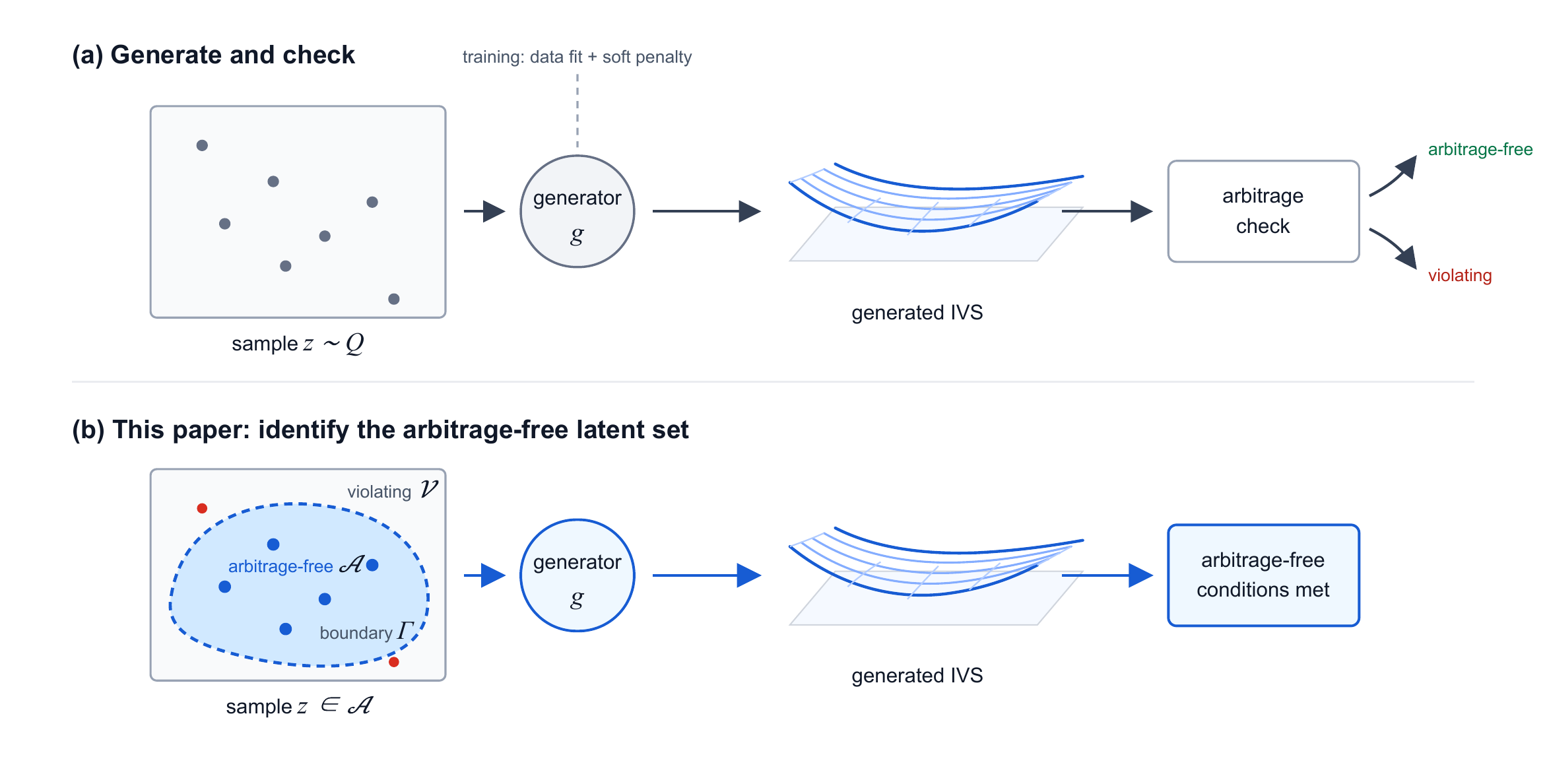}
    \caption{No-arbitrage constraints in latent space. In generate-and-check procedures, latent codes are sampled before the generated surfaces are tested for violations. The latent-space construction identifies the admissible codes before sampling from the fixed generator.}
    \label{fig:conceptual-schematic}
\end{figure}

For each latent code, we assign a scalar margin obtained from the no-arbitrage conditions of the generated surface. A positive margin corresponds to strict admissibility, and a negative margin indicates a violation. The codes with nonnegative margin form the admissible latent set. Its boundary separates latent codes that produce admissible surfaces from those that do not. This construction depends on the map from latent variables to generated surfaces, rather than on the architecture used to obtain that map. It therefore applies to VAE decoders \citep{KingmaWelling2014}, GAN generators \citep{Goodfellow2014}, normalizing flows, and other fixed generators.

We first establish that a latent code with positive margin has an admissible neighborhood. We then relate the boundary of the admissible set to the zero-margin codes and give conditions under which they agree locally. For regular boundary components, a Hamilton--Jacobi level-set equation provides an evolving-interface calculation. We test this calculation on two examples with known boundaries. We then study a two-dimensional VAE trained on Heston implied volatility surfaces by evaluating its latent margin directly and refining sign-changing grid edges. The trained-generator study compares reconstruction accuracy with the size and probability of the admissible region across several training seeds, identifies the no-arbitrage condition that determines most of the computed boundary, and tests a local correction of violating latent codes.

The remainder of this paper is organized as follows. Section~2 introduces generative models and no-arbitrage conditions of implied volatility surfaces. Sections~3--5 develop the margin, analyze the latent admissible set, and present its level-set computation. Section~6 gives toy  examples, and Section~7 presents  the results regarding trained generative  models. Section~8 concludes; the appendices contain proofs and numerical details.

\section{Generative Models and No-Arbitrage Constraints}

\subsection{Surface representation and static no-arbitrage conditions}
\label{sec:surfaces}

Let \(\Omega\subset\R^2\) denote a compact computational domain in log-moneyness and time to maturity:
\[
    (m,\tau)\in\Omega,
    \qquad m=\log(K/S_0),\quad \tau=T-t\in[\tau_{\min},\tau_{\max}],
\]
where \(0<\tau_{\min}<\tau_{\max}\), so that the short end \(\tau=0\), at which total variance degenerates, is excluded.
An implied volatility surface \citep{Dupire1994} is a map
\[
    \sigma:\Omega\to\R_+.
\]
It is often more convenient to work with total implied variance
\[
    w(m,\tau)=\tau\sigma^2(m,\tau).
\]
The use of total variance is standard in static no-arbitrage analysis because calendar-spread and butterfly-arbitrage conditions can be expressed directly in terms of \(w\) and its derivatives.

Throughout the paper, \(\X\) denotes a space of strictly positive total-variance surfaces. In the continuous formulation we use
\[
    \X=C^2_+(\Omega)
    :=\left\{w\in C^2(\Omega):\inf_{(m,\tau)\in\Omega}w(m,\tau)>0\right\},
\]
and in the discrete formulation \(\X\) is the positive cone in \(\R^{N_m\times N_\tau}\).

We equip the continuous space with the standard \(C^2\)-norm
\[
    \norm{w}_{C^2}:=\sum_{|\beta|\le2}\norm{D^\beta w}_\infty.
\]
For a total variance surface \(w\), the calendar-spread condition requires
\begin{equation}
    \partial_\tau w(m,\tau)\ge0,
    \qquad (m,\tau)\in\Omega.
    \label{eq:calendar}
\end{equation}
At a fixed maturity \(\tau\), the butterfly-arbitrage condition requires the corresponding risk-neutral density to be nonnegative. A convenient representation is given by the Gatheral--Jacquier expression \citep{GatheralJacquier2014}.
For notational simplicity, interest rates and dividends are neglected throughout, so that \(m=\log(K/S_0)\) coincides with the forward log-moneyness.
\begin{equation}
\begin{aligned}
    \Bop[w](m,\tau)
    :=&\left(1-\frac{m\partial_m w(m,\tau)}{2w(m,\tau)}\right)^2
    -\frac{(\partial_m w(m,\tau))^2}{4}
      \left(\frac{1}{w(m,\tau)}+\frac14\right) \\
    &\quad +\frac12\partial_{mm}w(m,\tau).
\end{aligned}
\label{eq:butterfly-operator}
\end{equation}
The finite-domain butterfly condition is written as
\begin{equation}
    \Bop[w](m,\tau)\ge0,
    \qquad (m,\tau)\in\Omega,
    \label{eq:butterfly}
\end{equation}
possibly together with suitable tail or boundary conditions when the full strike domain is considered. We work on finite computational domains, because generative implied volatility models typically operate on finite grids.

Accordingly, throughout this paper ``admissible'' and ``no-arbitrage'' refer to the calendar and butterfly conditions encoded on \(\Omega\) (or on the checked interior grid). They constitute a global static-arbitrage certificate only when supplemented by the appropriate strike-boundary and tail conditions. All subsequent statements are relative to the selected domain or grid and its accompanying boundary conditions.

\begin{remark}[Finite-domain viewpoint]
In empirical and generative settings, implied volatility surfaces are usually represented on a finite grid, for instance a fixed \(N_m\times N_\tau\) grid. In that case, the derivatives in \eqref{eq:calendar} and \eqref{eq:butterfly-operator} are replaced by finite-difference approximations on interior grid points. A numerical tolerance is often required to separate true arbitrage violations from discretization error.
\end{remark}

\subsection{Generative maps and induced surface families}

Let
\[
    g:\Z\subset\R^d\to\X
\]
be the total-variance realization map used in the analysis. For a latent code \(z\in\Z\),
\[
    g(z)=w_z.
\]
If an underlying decoder instead outputs implied volatility \(\sigma_z>0\), its associated realization map \(g\) is defined after the deterministic conversion
\[
    g(z)(m,\tau)=w_z(m,\tau):=\tau\sigma_z^2(m,\tau).
\]
Thus \(g\) always takes values in the total-variance space \(\X\), whether the architecture outputs total variance directly or supplies it through an implied-volatility postprocessing step. Positivity is part of the surface-space specification, not a consequence of the margin inequalities.
The definition of \(g\) is independent of the model architecture. Examples include GAN generators, normalizing-flow decoders, spline-based surface maps, VAE decoders interpreted as deterministic realization maps, and other parameterized surface-generating functions.

\begin{remark}[Deterministic generators]
All statements below concern a map \(z\mapsto w_z\) that returns one surface for each latent input. For a VAE, \(g(z)\) may be  a trained decoder; for a GAN, the random input is ordinarily included in \(z\), after which the generator is deterministic. 
\end{remark}

The generator induces a family of total variance surfaces
\[
    \{w_z:z\in\Z\}.
\]
The no-arbitrage constraints become latent-dependent inequalities:
\[
    \partial_\tau w_z(m,\tau)\ge0,
    \qquad
    \Bop[w_z](m,\tau)\ge0.
\]

Taking the minimum of the constraint values defines a scalar margin field on latent space.

\section{The No-Arbitrage Margin and the Admissible Set}
\label{sec:margin}

\subsection{No-arbitrage margins in surface and latent space}

\begin{definition}[Calendar margin]
For a sufficiently regular total variance surface \(w\), define
\begin{equation}
    \Mcal(w)
    :=\inf_{(m,\tau)\in\Omega}\partial_\tau w(m,\tau).
    \label{eq:calendar-margin}
\end{equation}
\end{definition}

\begin{definition}[Butterfly margin]
For such a surface \(w\), define
\begin{equation}
    \Mbf(w)
    :=\inf_{(m,\tau)\in\Omega}\Bop[w](m,\tau).
    \label{eq:butterfly-margin}
\end{equation}
\end{definition}

\begin{definition}[No-arbitrage margin]
The combined no-arbitrage margin is
\begin{equation}
    \M(w):=\min\{\Mcal(w),\Mbf(w)\}.
    \label{eq:combined-margin}
\end{equation}
\end{definition}

\begin{remark}[Sign convention]
We use the sign convention
\[
    \M(w)>0 \quad\text{strictly admissible for the encoded conditions},
\]
\[
    \M(w)=0 \quad\text{zero no-arbitrage margin},
\]
\[
    \M(w)<0 \quad\text{violation of an encoded condition}.
\]
\end{remark}

The margin \(\M\) is the minimum value of the no-arbitrage constraint functions over the prescribed domain. A surface is admissible only if the constraints hold at every checked grid point or throughout the continuous domain.

\begin{definition}[Latent-space no-arbitrage margin]
Given a generative map \(g:\Z\to\X\), define
\begin{equation}
    M(z):=(\M\circ g)(z)=\M(g(z)).
    \label{eq:latent-margin}
\end{equation}
\end{definition}

The composition \(M=\M\circ g\) transfers the no-arbitrage conditions from the generated surfaces to latent space. We write the latent margin as \(M\) (upright) and the surface margin as \(\M\).

\subsection{Admissible sets in continuous and discrete formulations}
\label{sec:continuous-discrete}

\begin{definition}[No-arbitrage admissible set]
The latent-space no-arbitrage admissible set is
\begin{equation}
    \A:=\{z\in\Z:M(z)\ge0\}.
    \label{eq:admissible-set}
\end{equation}
The strict admissible region is
\begin{equation}
    \A_+:=\{z\in\Z:M(z)>0\},
\end{equation}
and the violating region is
\begin{equation}
    \V:=\{z\in\Z:M(z)<0\}.
\end{equation}
The zero-margin set is
\begin{equation}
    \Sigma:=\{z\in\Z:M(z)=0\},
    \label{eq:zero-margin-set}
\end{equation}
and the topological boundary of the admissible set (relative to \(\Z\)) is
\begin{equation}
    \Gamma:=\partial_{\Z}\A.
    \label{eq:boundary}
\end{equation}
\end{definition}

If \(M\) is continuous, then \(\Gamma\subseteq\Sigma\): a boundary point cannot have strictly positive or strictly negative margin. Equality need not hold at a degenerate zero, because \(M\) may touch zero without changing sign.

A finite sample identifies admissible codes but does not determine \(\A\). The admissible set is the super-level set of \(M=\M\circ g\), with its topology and boundary defined independently of a sample.

We use continuous-domain and discrete-grid formulations.

\paragraph{Continuous-domain formulation.}
Assume \(w_z\in C^2_+(\Omega)\) and the derivatives in \eqref{eq:calendar} and \eqref{eq:butterfly-operator} are classical. Then the margins \eqref{eq:calendar-margin} and \eqref{eq:butterfly-margin} are defined by infima over \(\Omega\). Since \(\Omega\) is compact and the derivatives are continuous, these infima are attained and are in fact minima. The theoretical results use this formulation.

\paragraph{Discrete-grid formulation.}
Let
\[
    \Omega_h=\{(m_i,\tau_j):1\le i\le N_m,\;1\le j\le N_\tau\}
\]
be a finite grid. Let \(D_\tau^h\) and \(\Bop_h\) denote finite-difference approximations to \(\partial_\tau\) and \(\Bop\). Define
\begin{equation}
    \M_h(w):=\min_{(i,j)\in\mathcal I_h}
    \left\{D_\tau^h w_{ij},\,\Bop_h[w]_{ij}\right\},
    \label{eq:discrete-margin}
\end{equation}
where \(\mathcal I_h\) denotes the set of interior grid points used for derivative evaluation. Because finite-difference stencils require neighboring points, \(\mathcal I_h\) contains interior points only; arbitrage at the boundary of \(\Omega_h\) is not checked by \(M_h\) and is delegated to the tail or boundary conditions accompanying \eqref{eq:butterfly}. The discrete latent margin is
\begin{equation}
    M_h(z)=\M_h(g(z)).
\end{equation}
In numerical work, one often uses the tolerance-adjusted admissible set
\begin{equation}
    \A_h^{\eps}:=\{z\in\Z:M_h(z)\ge-\eps\},
    \label{eq:tolerance-admissible}
\end{equation}
where \(\eps>0\) accounts for finite-difference and floating-point errors.

\begin{remark}[Numerical tolerance]
The tolerance \(\eps\) should be interpreted as a pointwise numerical threshold in the arbitrage check, not as an economic relaxation of no-arbitrage. A grid point is treated as violating the condition only when the computed margin falls below \(-\eps\).
\end{remark}

With the topology fixed above, the calendar and butterfly margin functionals are continuous on \(C^2_+(\Omega)\). They are Lipschitz on sets for which \(\norm{w}_{C^2}\le R\) and \(\inf_\Omega w\ge c>0\): the differential terms are controlled by the \(C^2\)-norm, the rational terms in the butterfly operator are controlled by \(c\), and taking an infimum over compact \(\Omega\) is non-expansive in the sup-norm. When \(\X\) is a finite-dimensional positive grid space, the discrete margin \(\M_h\) is continuous on subsets bounded away from zero by the argument of Appendix~\ref{app:proofs}.

\section{Topology, Local Stability, and Boundary Regularity}
\label{sec:existence}

Under continuity of \(M\), every code with positive margin has a neighborhood contained in \(\A\). This local property does not imply admissibility throughout the latent space.

\begin{assumption}[Regularity of the generative map]
\label{ass:g-regular}
The generative map \(g:\Z\to\X\) is continuous. In stronger statements, \(g\) is assumed to be locally Lipschitz.
\end{assumption}

\begin{assumption}[Regularity of the margin functional]
\label{ass:m-regular}
The no-arbitrage margin functional \(\M:\X\to\R\) is continuous with respect to the topology on \(\X\) fixed in Section~\ref{sec:surfaces} (the \(C^2\)-norm on \(C^2_+(\Omega)\), or the Euclidean topology on the positive finite-dimensional grid space). In stronger statements, \(\M\) is assumed to be locally Lipschitz on subsets whose total variance is bounded away from zero.
\end{assumption}

\begin{assumption}[Non-degenerate boundary]
\label{ass:nondegenerate}
At an interior zero-margin point \(z_0\in\Sigma\cap\operatorname{int}(\Z)\), the latent margin is differentiable and
\[
    \nabla M(z_0)\ne0.
\]
\end{assumption}

\begin{remark}[Regularity of neural generators]
Feedforward neural networks with ReLU, leaky ReLU, tanh, sigmoid, softplus, or GELU activations are continuous and locally Lipschitz. Networks with smooth activations define smooth generators, while ReLU networks are piecewise affine and locally Lipschitz. These architectures satisfy the continuity and local Lipschitz assumptions imposed on \(g\). They do not imply that \(M=\M\circ g\) is \(C^1\), since the infimum over the surface domain and the minimum across constraints can be nonsmooth when minimizers or active constraints switch. Smoothness of the boundary requires the separate differentiability and non-degeneracy assumptions stated above.
\end{remark}

\subsection{Continuity, topology, and local stability}

\begin{proposition}[Continuity of the induced margin]
\label{prop:continuity}
Under Assumptions~\ref{ass:g-regular} and~\ref{ass:m-regular}, the latent-space margin
\[
    M=\M\circ g
\]
is continuous on \(\Z\). If both \(g\) and \(\M\) are locally Lipschitz, then \(M\) is locally Lipschitz.
\end{proposition}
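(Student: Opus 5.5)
The continuity claim is the standard fact that a composition of continuous maps is continuous. By Assumption~\ref{ass:g-regular}, \(g:\Z\to\X\) is continuous when \(\X\) carries the topology fixed in Section~\ref{sec:surfaces}. By Assumption~\ref{ass:m-regular}, \(\M:\X\to\R\) is continuous for that same topology. I would write this out sequentially. Fix \(z_0\in\Z\) and \(\eps>0\). Continuity of \(\M\) at \(g(z_0)\) gives \(\delta'>0\) such that \(\norm{w-g(z_0)}<\delta'\) implies \(\abs{\M(w)-\M(g(z_0))}<\eps\). Continuity of \(g\) at \(z_0\) gives \(\delta>0\) such that \(\abs{z-z_0}<\delta\) implies \(\norm{g(z)-g(z_0)}<\delta'\). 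Combining the two yields \(\abs{M(z)-M(z_0)}<\eps\).

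For the Lipschitz statement, fix \(z_0\). Local Lipschitz continuity of \(g\) gives a ball \(B_r(z_0)\cap\Z\) and a constant \(L_g\) with \(\norm{g(z)-g(z')}\le L_g\abs{z-z'}\) on that ball. Local Lipschitz continuity of \(\M\) at \(g(z_0)\) gives a neighbourhood \(U=\{w:\norm{w-g(z_0)}<\rho\}\) and a constant \(L_\M\) on \(U\). Since \(g(z_0)\in\X\) has \(\inf_\Omega g(z_0)>0\), I would shrink \(\rho\) below \(\tfrac12\inf_\Omega g(z_0)\). In both the continuous and discrete settings the norm controls the sup-norm, so every \(w\in U\) is bounded away from zero by \(\tfrac12\inf g(z_0)\) and has \(C^2\)-norm at most \(\norm{g(z_0)}_{C^2}+\rho\). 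This is exactly the class of sets on which the stronger part of Assumption~\ref{ass:m-regular} provides a Lipschitz constant.

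Next I would choose \(r'\le r\) with \(L_g r'<\rho\). Then \(g(B_{r'}(z_0)\cap\Z)\subset U\), and for \(z,z'\) in this ball
\[
\abs{M(z)-M(z')}\le L_\M\norm{g(z)-g(z')}\le L_\M L_g\abs{z-z'}.
\]
This shows \(M\) is locally Lipschitz.

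The only delicate point is the bookkeeping in the previous paragraph: the local Lipschitz property of \(\M\) is stated only on subsets bounded away from zero. One must therefore check that the image of a small latent ball stays inside a single such subset, and the choice of \(\rho\) and \(r'\) above is what guarantees this. Everything else is routine.
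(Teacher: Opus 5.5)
Your proposal is correct and follows the paper's route: continuity by composition, and the local bound $|M(z)-M(z')|\le L_{\M}L_g\norm{z-z'}$ obtained by chaining the two local Lipschitz constants. Your choice of $\rho$ and $r'$, which keeps $g(B_{r'}(z_0)\cap\Z)$ inside a neighbourhood of $g(z_0)$ that is bounded away from zero and on which $\M$ is Lipschitz, just spells out what the paper compresses into ``for $z,z'$ sufficiently close to $z_0$.''
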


\begin{proof}
The composition of continuous maps is continuous. If \(g\) is locally Lipschitz near \(z_0\) and \(\M\) is locally Lipschitz on a neighborhood of \(g(z_0)\), then for \(z,z'\) sufficiently close to \(z_0\),
\[
    |M(z)-M(z')|
    =|\M(g(z))-\M(g(z'))|
    \le L_\M \norm{g(z)-g(z')}_{\X}
    \le L_\M L_g \norm{z-z'}_{\Z}.
\]
Thus \(M\) is locally Lipschitz.
\end{proof}

\begin{theorem}[Basic topology of the admissible set]
\label{thm:topology}
Suppose \(M:\Z\to\R\) is continuous. Then
\[
    \A=\{z\in\Z:M(z)\ge0\}
\]
is closed in \(\Z\), while
\[
    \A_+=\{z\in\Z:M(z)>0\},
    \qquad
    \V=\{z\in\Z:M(z)<0\}
\]
are open in \(\Z\). If \(\Z\) is compact, then \(\A\) is compact. The set \(\A\) is nonempty if and only if there exists at least one \(z\in\Z\) with \(M(z)\ge0\); continuity alone does not guarantee nonemptiness.
\end{theorem}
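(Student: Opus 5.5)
The plan is to read each set as a preimage under the continuous map \(M:\Z\to\R\) and then use the standard fact that preimages of closed (open) sets are closed (open) in the subspace topology. First I write
\[
    \A=M^{-1}\bigl([0,\infty)\bigr),\qquad
    \A_+=M^{-1}\bigl((0,\infty)\bigr),\qquad
    \V=M^{-1}\bigl((-\infty,0)\bigr).
\]
Since \([0,\infty)\) is closed in \(\R\) and the two open half-lines are open, continuity of \(M\) (for instance as supplied by Proposition~\ref{prop:continuity} under Assumptions~\ref{ass:g-regular} and~\ref{ass:m-regular}) gives at once that \(\A\) is closed in \(\Z\) and that \(\A_+\) and \(\V\) are open in \(\Z\). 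All topological statements are relative to \(\Z\), which matches the convention fixed for \(\Gamma=\partial_{\Z}\A\).

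A sequential argument gives the same conclusion and is worth recording because it is the form used later for local stability. If \(z_n\in\A\) and \(z_n\to z\in\Z\), then \(M(z)=\lim M(z_n)\ge0\), so \(z\in\A\). If \(M(z_0)>0\), continuity at \(z_0\) with tolerance \(M(z_0)/2\) yields a relative neighborhood on which \(M>M(z_0)/2>0\). The argument for \(\V\) is symmetric.

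For compactness, suppose \(\Z\) is compact. Then \(\A\) is a closed subset of a compact space and is therefore compact. Because \(\Z\subset\R^d\), I can also argue directly: \(\Z\) is closed and bounded, and \(\A\) is closed in \(\Z\), hence closed in \(\R^d\) and bounded, so Heine--Borel applies. The nonemptiness claim is a tautology of the definition, since \(\A\neq\emptyset\) exactly when some \(z\) has \(M(z)\ge0\). To show that continuity cannot guarantee it, I would give the example \(M\equiv-1\), which is continuous and has \(\A=\emptyset\). A concrete instance is a generator whose every output has \(\partial_\tau w_z<0\) somewhere.

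No step here is a genuine obstacle. The only point needing care is keeping every openness and closedness claim relative to \(\Z\) rather than \(\R^d\), because \(\Z\) need not be open or closed in \(\R^d\). For the compactness statement I would rely on compactness of \(\Z\) itself rather than on any closedness of \(\Z\) in \(\R^d\).
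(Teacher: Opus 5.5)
Your proposal is correct and matches the paper's proof: closedness and openness come from writing \(\A\), \(\A_+\), \(\V\) as preimages of \([0,\infty)\), \((0,\infty)\), \((-\infty,0)\) under the continuous \(M\); compactness holds because \(\A\) is a closed subset of the compact \(\Z\); and nonemptiness follows from the definition. Your sequential argument and the counterexample \(M\equiv-1\) are valid additions that the paper leaves implicit.
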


\begin{proof}
The set \([0,\infty)\) is closed in \(\R\), hence \(\A=M^{-1}([0,\infty))\) is closed under continuity of \(M\). Similarly, \((0,\infty)\) and \((-\infty,0)\) are open, hence \(\A_+\) and \(\V\) are open. If \(\Z\) is compact, any closed subset of \(\Z\) is compact. The nonemptiness statement follows directly from the definition of \(\A\).
\end{proof}

\begin{theorem}[Local stability of strict admissibility]
\label{thm:local-stability}
Suppose \(M\) is continuous at \(z_0\in\Z\) and
\[
    M(z_0)>0.
\]
Then there exists \(\delta>0\) such that
\[
    z\in\Z,\quad \norm{z-z_0}<\delta
    \quad\Longrightarrow\quad
    z\in\A_+.
\]
Equivalently, relative to the latent domain,
\[
    B_\delta(z_0)\cap\Z\subset\A_+.
\]
\end{theorem}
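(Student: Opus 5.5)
The argument is the standard epsilon--delta consequence of continuity at a single point; no global regularity of \(M\) is needed. Set \(\eps_0:=M(z_0)/2>0\), which is legitimate because \(M(z_0)>0\) by hypothesis. Continuity of \(M\) at \(z_0\), taken relative to the latent domain \(\Z\) with its Euclidean norm, gives \(\delta>0\) with
\[
    z\in\Z,\quad \norm{z-z_0}<\delta
    \quad\Longrightarrow\quad
    \abs{M(z)-M(z_0)}<\eps_0 .
\]

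For every such \(z\) this yields
\[
    M(z)>M(z_0)-\eps_0=\tfrac12 M(z_0)>0,
\]
so \(z\in\A_+\) by definition of the strict admissible region. Restating this set-theoretically gives the second form of the conclusion,
\[
    B_\delta(z_0)\cap\Z\subset\A_+ .
\]
The argument also gives a quantitative bonus: the margin is bounded below by \(M(z_0)/2\) on the whole neighborhood, not merely positive.

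Two remarks close the argument. First, when \(z_0\in\operatorname{int}(\Z)\), one may shrink \(\delta\) so that \(B_\delta(z_0)\subset\Z\), which gives a genuine open ball inside \(\A_+\). This is consistent with the openness of \(\A_+\) in Theorem~\ref{thm:topology}, although the present statement needs only pointwise continuity. Second, under the Lipschitz hypotheses of Proposition~\ref{prop:continuity} the radius becomes explicit: if \(M\) is \(L\)-Lipschitz near \(z_0\), then \(\delta=M(z_0)/L\) works, up to the radius of the Lipschitz neighborhood.

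The only step needing care is to read continuity relative to \(\Z\) rather than to \(\R^d\), since \(\Z\) need not be open. This is not a real obstacle; it is exactly why the conclusion intersects the ball with \(\Z\).
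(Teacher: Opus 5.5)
Your proof is correct and matches the paper's argument: both set the tolerance to \(M(z_0)/2\), use continuity at \(z_0\) to obtain \(\delta\), and conclude \(M(z)>M(z_0)/2>0\) on \(B_\delta(z_0)\cap\Z\). Your closing remarks on the interior case and the explicit Lipschitz radius agree with Theorem~\ref{thm:topology} and Corollary~\ref{cor:lipschitz-radius}.
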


\begin{proof}
Let \(\eta=M(z_0)/2>0\). By continuity, there exists \(\delta>0\) such that
\[
    \norm{z-z_0}<\delta
    \quad\Longrightarrow\quad
    |M(z)-M(z_0)|<\eta.
\]
Therefore
\[
    M(z)>M(z_0)-\eta=M(z_0)/2>0,
\]
so \(z\in\A_+\).
\end{proof}

\begin{corollary}[Explicit stability radius under local Lipschitz regularity]
\label{cor:lipschitz-radius}
Suppose \(M(z_0)>0\) and there exist \(r>0\) and \(L>0\) such that \(M\) is \(L\)-Lipschitz on \(B_r(z_0)\cap\Z\). Then any radius satisfying
\[
    0<\delta<\min\left\{r,\frac{M(z_0)}{L}\right\}
\]
has the property that
\[
    B_\delta(z_0)\cap\Z\subset\A_+.
\]
\end{corollary}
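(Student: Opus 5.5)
The plan is a direct quantitative version of the argument in Theorem~\ref{thm:local-stability}: the $\varepsilon$--$\delta$ continuity step is replaced by the Lipschitz bound, which produces an explicit radius. Fix any $\delta$ with $0<\delta<\min\{r, M(z_0)/L\}$ and any $z\in B_\delta(z_0)\cap\Z$. Two facts about such a $z$ drive the proof.

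\emph{Step 1.} Since $\delta<r$, the point $z$ lies in $B_r(z_0)\cap\Z$, and so does $z_0$. The Lipschitz hypothesis therefore applies to the pair $(z,z_0)$ and gives
\[
    |M(z)-M(z_0)|\le L\norm{z-z_0}<L\delta.
\]

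\emph{Step 2.} Since $\delta<M(z_0)/L$, we have $L\delta<M(z_0)$. Combining this with Step~1 yields
\[
    M(z)\ge M(z_0)-|M(z)-M(z_0)|>M(z_0)-L\delta>0.
\]
Hence $z\in\A_+$. As $z$ was arbitrary, $B_\delta(z_0)\cap\Z\subset\A_+$.

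There is no real obstacle here. The only points needing care are that both points must lie in the region where the Lipschitz estimate holds, which is the reason for the constraint $\delta<r$, and that the inequalities must be kept strict so that the conclusion is $z\in\A_+$ rather than merely $z\in\A$. Strictness comes from the open ball, $\norm{z-z_0}<\delta$, together with the strict bound $\delta<M(z_0)/L$; either one alone already suffices. One may also remark that the same argument works for $\delta=\min\{r,M(z_0)/L\}$ itself, but the statement only requires the strict range.
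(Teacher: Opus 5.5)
Your proposal is correct and follows essentially the same route as the paper: the paper's proof is the one-line chain $M(z)\ge M(z_0)-L\norm{z-z_0}>M(z_0)-L\delta>0$. You additionally make explicit that $\delta<r$ places $z$ in the region where the Lipschitz bound holds, which the paper leaves implicit.
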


\begin{proof}
For \(\norm{z-z_0}<\delta\),
\[
    M(z)\ge M(z_0)-L\norm{z-z_0}>M(z_0)-L\delta>0.
\]
\end{proof}

\subsection{Boundary regularity}

\begin{theorem}[Local boundary regularity]
\label{thm:implicit}
Suppose \(M\) is \(C^1\) on a neighborhood of \(z_0\in\Sigma\cap\operatorname{int}(\Z)\) and
\[
    \nabla M(z_0)\ne0.
\]
Then \(z_0\in\Gamma\), and in a neighborhood of \(z_0\) the sets \(\Gamma\) and \(\Sigma\) coincide and form a \(C^1\) hypersurface of dimension \(d-1\), where \(d=\dim(\Z)\). Its normal direction is
\[
    n(z_0)=\frac{\nabla M(z_0)}{\norm{\nabla M(z_0)}}.
\]
\end{theorem}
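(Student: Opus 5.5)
The plan is to apply the implicit function theorem to \(M\) at \(z_0\), and then use the sign structure of \(M\) near \(z_0\) to identify \(\Gamma\) with \(\Sigma\) locally.

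\emph{Step 1: graph representation of \(\Sigma\).} Since \(\nabla M(z_0)\ne0\), some partial derivative is nonzero; after relabeling coordinates, assume \(\partial_{z_d}M(z_0)\ne0\). Write \(z=(y,t)\) with \(y\in\R^{d-1}\), \(t\in\R\), and \(z_0=(y_0,t_0)\). The implicit function theorem, applied to the \(C^1\) function \(M\) on a neighborhood contained in \(\operatorname{int}(\Z)\), gives an open box \(U=U'\times(t_0-a,t_0+a)\subset\operatorname{int}(\Z)\) and a \(C^1\) function \(\phi:U'\to(t_0-a,t_0+a)\). On this box,
\[
\Sigma\cap U=\{(y,\phi(y)):y\in U'\}.
\]
This set is a \(C^1\) hypersurface of dimension \(d-1\), and its normal is proportional to \(\nabla M(z_0)\) because \(\nabla M(z_0)\) annihilates the tangent vectors \((v,\nabla\phi(y_0)\cdot v)\).

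\emph{Step 2: sign structure near \(z_0\).} Shrink \(U\) so that \(\partial_t M\) keeps a constant sign on it, say \(\partial_t M>0\); the other case is symmetric. For fixed \(y\in U'\), the map \(t\mapsto M(y,t)\) is then strictly increasing on the fiber and vanishes exactly at \(t=\phi(y)\). Hence \(M<0\) for \(t<\phi(y)\) and \(M>0\) for \(t>\phi(y)\). This also shows directly that \(M\) changes sign across \(\Sigma\) and that \(\nabla M/\norm{\nabla M}\) points into \(\A_+\).

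\emph{Step 3: \(\Gamma\cap U=\Sigma\cap U\).} The inclusion \(\Gamma\subseteq\Sigma\) already follows from continuity, as noted after the definition of \(\Gamma\). For the reverse inclusion, take \((y,\phi(y))\in\Sigma\cap U\). Every neighborhood of this point contains \((y,\phi(y)+s)\in\A_+\subset\A\) and \((y,\phi(y)-s)\in\V\subset\Z\setminus\A\) for small \(s>0\). Because \(U\) is open in \(\Z\), these points lie in \(\Z\), so \((y,\phi(y))\) is a boundary point of \(\A\) relative to \(\Z\). In particular, taking \(y=y_0\) gives \(z_0\in\Gamma\).

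\emph{Main obstacle.} The delicate point is keeping the boundary relative to \(\Z\). I will handle it by choosing \(U\subset\operatorname{int}(\Z)\) from the start, so that relative and ambient neighborhoods coincide there. I will also make sure the box is shrunk enough that both the implicit-function graph and the constant sign of \(\partial_t M\) hold on the whole of \(U\), so that the monotonicity argument along each fiber is valid.
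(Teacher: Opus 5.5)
Your proposal is correct and follows the paper's own route. The implicit function theorem gives the local $C^1$ graph, and the sign change of $M$ across $\Sigma$ then yields $\Gamma=\Sigma$ near $z_0$. Your fiberwise monotonicity argument, which keeps $\partial_t M$ of constant sign on the whole box, makes explicit the uniformity over nearby points of $\Sigma$ that the paper's one-line appeal to moving in the directions $\pm\nabla M(z_0)$ leaves implicit.
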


\begin{proof}
The implicit function theorem applied to \(M(z)=0\) shows that \(\Sigma\) is a local \(C^1\) hypersurface. Since \(\nabla M(z_0)\ne0\), moving a sufficiently small distance from \(z_0\) in the directions \(\pm\nabla M(z_0)\) gives margins of opposite sign. Hence every nearby point of \(\Sigma\) separates \(\A\) from \(\V\), so \(\Gamma=\Sigma\) locally.
\end{proof}

\begin{remark}[Boundary points are unstable in general]
At a boundary point \(z_0\) with \(M(z_0)=0\), first-order perturbations satisfy
\[
    M(z_0+h)=\nabla M(z_0)\cdot h+o(\norm h).
\]
If \(\nabla M(z_0)\ne0\), perturbations in one normal direction enter the admissible region and perturbations in the opposite direction enter the violating region. Thus boundary codes should not be viewed as locally stable without additional margin.
\end{remark}

If the latent prior is standard Gaussian, \(z\sim N(0,I)\), the high-probability region lies near the origin. If the margin is positive throughout a central region, Theorem~\ref{thm:local-stability} gives local stability there. A path into the latent tails or beyond the observed feature domain may approach or cross \(\Gamma\). Whether violations are more frequent in these regions depends on the trained generator. Positive margin implies local admissibility, not a global no-arbitrage guarantee.

\section{Computing the Admissible-Set Boundary}
\label{sec:hj}

\subsection{Level-set representation and local attraction}

The margin provides the implicit level set
\[
    \Sigma=\{z\in\Z:M(z)=0\}.
\]
The actual boundary satisfies \(\Gamma\subseteq\Sigma\), with local equality at the non-degenerate zeros covered by Theorem~\ref{thm:implicit}. The implicit representation does not require a direct parameterization of the boundary and remains defined for multiple components, holes, corners, and changes in topology. It also yields an evolution equation that can be analyzed with viscosity-solution theory and approximated numerically.

For numerical purposes, introduce an auxiliary level-set function \citep{OsherSethian1988,Sethian1999}
\[
    \Phi:\Z\times[0,\infty)\to\R,
\]
whose zero contour
\[
    \Gamma(t)=\{z\in\Z:\Phi(z,t)=0\}
\]
represents an evolving approximation to the no-arbitrage boundary.

A natural level-set evolution is
\begin{equation}
    \Phi_t+M(z)|\nabla_z\Phi|=0.
    \label{eq:hj}
\end{equation}
The Hamiltonian is
\begin{equation}
    H(z,p)=M(z)|p|.
    \label{eq:hamiltonian}
\end{equation}
Writing \(F=-\Phi_t/\abs{\nabla\Phi}\) for the speed of the front in the direction of \(\nabla\Phi\), equation \eqref{eq:hj} gives \(F=M(z)\). With the convention \(\Phi<0\) on the admissible side and \(\Phi>0\) on the violating side, a positive margin moves the front toward the boundary from the admissible side, a negative margin moves it in the opposite direction, and the speed vanishes on \(\Sigma\). The following proposition concerns attraction along the fixed normal line through a regular zero, where \(\Sigma\) and \(\Gamma\) agree. It does not establish convergence of a curved front or coverage of every component of \(\Gamma\).

\begin{proposition}[Fixed-normal attraction at a regular zero-margin point]
\label{prop:front-convergence}
Let \(M\in C^1\) near \(z_*\in\Sigma\), with \(\nabla M(z_*)\ne0\), and set
\[
    n=-\frac{\nabla M(z_*)}{\norm{\nabla M(z_*)}}.
\]
With \(n\) pointing from the positive-margin side into the negative-margin side, for all sufficiently small \(\rho_0\) the normal trajectory defined by
\[
    \dot\rho(t)=M\bigl(z_*+\rho(t)n\bigr),
    \qquad \rho(0)=\rho_0,
\]
exists for all \(t\ge0\), remains in a neighborhood of zero, and satisfies \(\rho(t)\to0\). Moreover,
\[
    \lim_{t\to\infty}\frac{1}{t}\log|\rho(t)|
    =-\norm{\nabla M(z_*)}
\]
whenever \(\rho_0\ne0\). Thus the zero is locally attracting along this fixed normal cross-section, with asymptotic exponential rate \(\norm{\nabla M(z_*)}\). The conclusion does not assert convergence of a general characteristic or of the full level-set front.
\end{proposition}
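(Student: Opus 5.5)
We reduce the statement to a scalar autonomous ODE and linearize at its equilibrium. Define \(f(\rho):=M(z_*+\rho n)\) for \(|\rho|<r_0\), where \(r_0>0\) is chosen so that \(M\) is \(C^1\) on \(B_{r_0}(z_*)\). Then \(f\) is \(C^1\), \(f(0)=M(z_*)=0\), and by the chain rule
\[
    f'(0)=\nabla M(z_*)\cdot n=-\norm{\nabla M(z_*)}=:-\lambda<0.
\]
Continuity of \(f'\) lets us shrink to some \(r\in(0,r_0)\) such that \(-2\lambda\le f'(\rho)\le-\lambda/2\) on \([-r,r]\). The mean value theorem then gives \(f(\rho)=f'(\xi)\rho\) with \(\xi\) between \(0\) and \(\rho\). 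Hence \(f(\rho)\) has sign opposite to \(\rho\), and \(\lambda|\rho|/2\le|f(\rho)|\le2\lambda|\rho|\) on \([-r,r]\).

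For global existence and convergence, take \(|\rho_0|<r\). Since \(f\) is \(C^1\), hence locally Lipschitz, the Picard--Lindelöf theorem gives a unique local solution. If \(\rho_0=0\), uniqueness forces \(\rho\equiv0\). Otherwise uniqueness also prevents the solution from reaching \(0\) in finite time, so \(\rho(t)\) keeps the sign of \(\rho_0\). Then
\[
    \frac{\dd}{\dd t}|\rho|=\operatorname{sgn}(\rho)f(\rho)=-|f(\rho)|\le-\frac{\lambda}{2}|\rho|<0,
\]
so \(|\rho|\) decreases and the solution stays in \([-r,r]\). A solution that remains in a compact subset of the domain cannot blow up, so it extends to all \(t\ge0\). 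Grönwall's inequality gives \(|\rho(t)|\le|\rho_0|e^{-\lambda t/2}\to0\).

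For the exact rate, the solution is nonzero for all \(t\), so we can write
\[
    \frac{\dd}{\dd t}\log|\rho(t)|=\frac{f(\rho(t))}{\rho(t)}=:q(\rho(t)),
\]
where \(q(\rho)=f(\rho)/\rho\) for \(\rho\ne0\). Since \(f(\rho)/\rho\to f'(0)\) as \(\rho\to0\), setting \(q(0):=f'(0)=-\lambda\) makes \(q\) continuous. Integrating,
\[
    \frac1t\log|\rho(t)|=\frac1t\log|\rho_0|+\frac1t\int_0^t q(\rho(s))\,\dd s.
\]
Because \(\rho(s)\to0\), the integrand tends to \(-\lambda\), and a Cesàro-average argument shows the right-hand side converges to \(-\lambda=-\norm{\nabla M(z_*)}\).

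The main point needing care is the middle step. We must rule out the trajectory crossing zero or leaving the neighborhood where the sign and derivative bounds on \(f\) hold, since global existence depends on both. The sign control from the mean value theorem handles leaving the neighborhood, and uniqueness at the equilibrium \(\rho=0\) handles crossing zero. The rate computation is then routine. No earlier result is needed beyond Theorem~\ref{thm:implicit}'s setting, which guarantees that \(n\) points from the positive-margin side to the negative-margin side.
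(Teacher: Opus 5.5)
Your proof is correct and follows the paper's argument. Both reduce to the scalar ODE for $f(\rho)=M(z_*+\rho n)$, use $f'(0)=-\norm{\nabla M(z_*)}<0$ to get $\rho f(\rho)<0$ near zero, and integrate $\dot\rho/\rho=f(\rho)/\rho$ to obtain the rate. You additionally supply the details the paper leaves to ``standard one-dimensional ODE theory'': the mean-value bounds, uniqueness ruling out a zero-crossing, confinement to $[-r,r]$ with the Gr\"onwall estimate, and the Ces\`aro limit.
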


\begin{proof}
Write \(f(\rho)=M(z_*+\rho n)\). Then \(f(0)=0\) and
\[
    f'(0)=\nabla M(z_*)\cdot n=-\norm{\nabla M(z_*)}<0.
\]
Hence \(\rho f(\rho)<0\) for all sufficiently small nonzero \(\rho\). Standard one-dimensional ODE theory gives a unique trajectory that remains in this neighborhood, is monotone toward zero, and converges to zero. Since
\[
    \frac{f(\rho)}{\rho}\longrightarrow-\norm{\nabla M(z_*)}
    \qquad\text{as }\rho\to0,
\]
integration of \(\dot\rho/\rho=f(\rho)/\rho\) gives the stated logarithmic decay rate.
\end{proof}

\begin{remark}[Local attraction versus global boundary coverage]
Proposition~\ref{prop:front-convergence} describes only a fixed-normal cross-section. A front initialized in one region need not reach every connected component of \(\Gamma\), and a bounded positive-margin component may have several boundary components. If \(M\equiv c>0\), the front propagates with constant speed and never stops; similarly, in the upper region \(z_2\ge-2/3\) of Example~2 no zero-margin interface lies ahead of the front. The initialization must therefore be chosen for the boundary components under consideration. Near a regular component, the computation monitors \(|\Phi_t|\approx|M|\,|\nabla\Phi|\) and stops when the front margin is below a tolerance of order \(\norm{\nabla M}\,h\) (Section~\ref{sec:numerical}). The proposition gives this local scale but does not prove convergence of the discretized front.
\end{remark}

\begin{remark}[Hamilton-Jacobi versus Hamilton-Jacobi-Bellman]
Equation~\eqref{eq:hj} is a Hamilton-Jacobi level-set equation with a prescribed velocity field \(M(z)\). A Hamilton-Jacobi-Bellman equation would arise if the normal velocity were selected by optimizing over controls, for example in an optimal projection or minimal-repair problem. The present level-set representation does not require such an optimization.
\end{remark}

\begin{remark}[Cost of the speed field]
A single evaluation of \(M(z)\) requires decoding an entire implied volatility surface and checking both no-arbitrage conditions on a grid; it is the dominant cost per node, not the level-set arithmetic. This cost motivates narrow-band updates, caching, and adaptive refinement, but it does not imply that an evolving front always requires fewer total margin evaluations than a one-shot full-grid contour calculation.
\end{remark}

\subsection{Well-posedness and numerical approximation}

Classical differentiability of \(\Phi\) cannot be expected in general. Even if \(M\) and \(\Phi_0\) are smooth, Hamilton-Jacobi equations may develop corners, shocks, or kinks in finite time. The appropriate solution concept is the viscosity solution \citep{CrandallLions1983}.

\begin{definition}[Viscosity solution, informal]
A continuous function \(\Phi\) is a viscosity subsolution of
\[
    \Phi_t+H(z,\nabla\Phi)=0
\]
if every smooth test function touching \(\Phi\) from above satisfies the PDE inequality at the contact point. It is a viscosity supersolution if every smooth test function touching from below satisfies the reverse inequality. A viscosity solution is both a subsolution and a supersolution.
\end{definition}

Viscosity solutions are stable under nonsmooth limits and provide the solution concept used for the level-set equation.

\begin{theorem}[Well-posedness of the level-set Cauchy problem]
\label{thm:hj-solvability}
Let the spatial domain be \(\R^d\), suppose that \(M\) is bounded and globally Lipschitz, and let \(\Phi_0\) be bounded and uniformly continuous. Then the Cauchy problem
\begin{equation}
\begin{cases}
    \Phi_t+M(z)|\nabla\Phi|=0, \\
    \Phi(z,0)=\Phi_0(z),
\end{cases}
\label{eq:hj-ivp}
\end{equation}
admits a unique bounded uniformly continuous viscosity solution. The same conclusion holds when \(M\) is Lipschitz and periodic and \(\Phi_0\) is continuous and periodic on the flat torus \(\mathbb T^d\).
\end{theorem}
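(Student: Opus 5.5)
The plan is to deduce the statement from the standard Crandall--Lions theory for first-order Cauchy problems \(\Phi_t+H(z,\nabla\Phi)=0\) on \(\R^d\), specialized to the Hamiltonian \eqref{eq:hamiltonian} \(H(z,p)=M(z)|p|\). I would first check the structural conditions that give a comparison principle for bounded uniformly continuous (BUC) sub- and supersolutions. Continuity of \(H\) is immediate. The Lipschitz bound in \(p\) is
\[
|H(z,p)-H(z,q)|\le \norm{M}_\infty |p-q|,
\]
which follows from boundedness of \(M\). The Lipschitz bound in \(z\) is
\[
|H(z,p)-H(y,p)|\le L_M|z-y|\,|p|,
\]
which uses the global Lipschitz constant \(L_M\). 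Together these are exactly the hypotheses under which comparison holds for the Cauchy problem on \(\R^d\times[0,T]\) for every \(T>0\).

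Next I would prove comparison directly, since it yields uniqueness. Take a BUC subsolution \(u\) and supersolution \(v\) with \(u(\cdot,0)\le v(\cdot,0)\). Assume \(\sup(u-v)>0\) on \(\R^d\times[0,T]\) and argue by contradiction. I would double variables using
\[
\Psi(z,y,t,s)=u(z,t)-v(y,s)-\frac{|z-y|^2}{2\varepsilon}-\frac{(t-s)^2}{2\varepsilon}-\beta(|z|^2+|y|^2)-\frac{\gamma}{T-t}.
\]
The \(\beta\)-term localizes the maximum, because \(\R^d\) is unbounded. Uniform continuity of \(u\) and \(v\) controls \(|z-y|^2/\varepsilon\to0\) and also keeps the maximizer away from \(t=0\).

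Subtracting the two viscosity inequalities at the maximizer leaves three kinds of terms. The first is
\[
M(z)|p|-M(y)|p|\le L_M|z-y|^2/\varepsilon\to0.
\]
The second comes from the penalization and is bounded by \(\norm{M}_\infty\cdot O(\beta(|z|+|y|))\); this vanishes because \(\beta|z|^2\) is bounded. The third is the \(\gamma/T^2\) term. Together they contradict strict positivity. This step is the main technical obstacle, namely handling the unbounded domain and the \(|z-y|\,|p|\) cross term. I would follow the standard argument of Crandall--Lions (or Barles' monograph) closely rather than reinvent it.

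For existence I would use Perron's method. A sub- and supersolution pair exists because \(\Phi_0\) is BUC. I would mollify \(\Phi_0\) to \(\Phi_0^\delta\), which is Lipschitz with \(\norm{\Phi_0-\Phi_0^\delta}_\infty\le\omega(\delta)\). Then \(\Phi_0^\delta\pm Ct\) with \(C=\norm{M}_\infty\norm{\nabla\Phi_0^\delta}_\infty\), shifted by \(\pm\omega(\delta)\), give barriers. Perron's construction then produces a viscosity solution squeezed between them, and comparison makes it continuous. To obtain uniform continuity in space I would compare \(\Phi(\cdot+h,\cdot)\) with \(\Phi\); this translate solves an equation with Hamiltonian perturbed by \(L_M|h|\,|p|\), and the standard Lipschitz-in-\(z\) estimate closes the argument. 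Alternatively, the optimal-control representation \(\Phi(z,t)=\) value along trajectories of \(\dot z=M(z)a\) with \(|a|\le1\), which are Lipschitz flows by Grönwall, gives existence and BUC directly. Boundedness follows from \(\inf\Phi_0\le\Phi\le\sup\Phi_0\), since constants are solutions.

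For the periodic case on \(\mathbb T^d\) I would lift to \(\R^d\). A Lipschitz periodic \(M\) is automatically bounded and globally Lipschitz, and a continuous periodic \(\Phi_0\) is BUC by compactness of the torus. The \(\R^d\) result therefore applies, and uniqueness shows that every translate by an integer period coincides with the solution, so the solution is periodic and descends to \(\mathbb T^d\).
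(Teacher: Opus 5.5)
Your proposal follows the paper's route: check that $H(z,p)=M(z)|p|$ is continuous, satisfies $|H(z,p)-H(z',p)|\le L_M|z-z'|\,|p|$ and $|H(z,p)-H(z,q)|\le \|M\|_\infty|p-q|$, and then invoke Crandall--Lions comparison and the standard existence theory; your doubling-of-variables, Perron-barrier and periodic-lifting sketches fill in details that the paper leaves to citation, and they are sound. You should discard the ``alternative'' optimal-control representation via $\dot z=M(z)a$, $|a|\le1$: the velocity set $\{M(z)a:|a|\le1\}$ depends only on $|M(z)|$, so the inf-value solves $\Phi_t+|M(z)|\,|\nabla\Phi|=0$ (and the sup-value solves the equation with $-|M|$) rather than the given equation, which is concave in $p$ where $M<0$ and would need a sup--inf differential-game representation built from $M^{\pm}=\max\{\pm M,0\}$.
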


\begin{proof}
The Hamiltonian \(H(z,p)=M(z)|p|\) is continuous in \((z,p)\). If \(L_M\) is the Lipschitz constant of \(M\), then
\[
    |H(z,p)-H(z',p)|\le L_M|z-z'|\,|p|,
\]
and for all \(p,q\)
\[
    |H(z,p)-H(z,q)|\le|M(z)|\,\bigl||p|-|q|\bigr|\le\norm{M}_{\infty}\,|p-q|,
\]
so \(H\) is uniformly Lipschitz in the momentum variable, uniformly in \(z\). The comparison principle of \citet{CrandallLions1983} therefore applies, and the standard existence-uniqueness theory for first-order Hamilton-Jacobi equations with continuous Hamiltonians and bounded uniformly continuous initial data yields a unique viscosity solution; see also \citet{Sethian1999} for the level-set context and Appendix~\ref{app:viscosity}.
\end{proof}

\begin{corollary}[Solvability for neural generators]
Suppose the generator \(g\) and margin functional \(\M\) induce a bounded globally Lipschitz margin \(M=\M\circ g\) on \(\R^d\) (or a Lipschitz periodic margin on \(\mathbb T^d\)). Then \eqref{eq:hj-ivp} is well posed in the viscosity sense for bounded uniformly continuous (respectively continuous periodic) initial data. On a finite latent box, one must additionally specify boundary data, impose a periodic interpretation, or fix an extension of \(M\) and \(\Phi_0\) beyond the box before invoking the Cauchy-problem result.
\end{corollary}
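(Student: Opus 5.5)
The corollary reduces directly to Theorem~\ref{thm:hj-solvability}; there is essentially nothing new to prove beyond checking that its hypotheses hold. I will set \(M=\M\circ g\). The assumption of the corollary states that \(M\) is bounded and globally Lipschitz on \(\R^d\). In the periodic case it states that \(M\) is Lipschitz and periodic on \(\mathbb T^d\).

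Proposition~\ref{prop:continuity} only gives local Lipschitz continuity of the composition, so I will not derive the global property from it. The global Lipschitz and boundedness properties are hypotheses, and the first step is simply to record that they are exactly the hypotheses on \(M\) in Theorem~\ref{thm:hj-solvability}. With \(H(z,p)=M(z)|p|\), the estimates in that proof go through unchanged:
\[
\abs{H(z,p)-H(z',p)}\le L_M\abs{z-z'}\,\abs{p},
\qquad
\abs{H(z,p)-H(z,q)}\le\norm{M}_\infty\abs{p-q}.
\]
Theorem~\ref{thm:hj-solvability} then yields a unique bounded uniformly continuous viscosity solution for any bounded uniformly continuous \(\Phi_0\). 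For the torus, compactness of \(\mathbb T^d\) makes a continuous periodic \(\Phi_0\) automatically bounded and uniformly continuous, and makes a Lipschitz \(M\) bounded, so the second clause of the theorem applies. Together these give well-posedness in the viscosity sense, which covers existence, uniqueness, and the continuous-dependence statement implied by the comparison principle.

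The last sentence, about a finite latent box, is a caveat rather than a claim. I would justify it by noting that the Cauchy-problem result is stated on \(\R^d\) or \(\mathbb T^d\). On a bounded box \(Q\) the equation needs boundary data, or else a reduction to one of these two settings.

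Two concrete reductions are available:
\begin{enumerate}[label=(\roman*)]
\item \textbf{Lipschitz extension.} Extend \(M\) from \(Q\) to \(\R^d\) with the same Lipschitz constant, using a McShane-type extension. Truncating it between \(\pm\sup_Q\abs{M}\) keeps it bounded and Lipschitz. Extend \(\Phi_0\) similarly, then apply the theorem.
\item \textbf{Periodic interpretation.} If \(M\) agrees on opposite faces of \(Q\), identify those faces and work on \(\mathbb T^d\).
\end{enumerate}

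The only subtle point is that the resulting solution restricted to \(Q\) depends on the chosen extension or boundary condition. This is why the corollary requires one of these to be specified rather than asserting well-posedness on the box itself. I expect no real obstacle, since the statement is designed to be a direct specialization.
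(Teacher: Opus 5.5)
Your proposal is correct and follows the paper's approach. The paper gives no separate proof of this corollary; it is obtained exactly as you do, by noting that a bounded globally Lipschitz (respectively Lipschitz periodic) $M=\M\circ g$ meets the hypotheses of Theorem~\ref{thm:hj-solvability}, with the finite-box sentence serving as a caveat rather than a claim. Your McShane-extension-with-truncation and face-identification remarks are sound elaborations of that caveat; the latter of course also needs $\Phi_0$ to agree on opposite faces, as the periodic initial data already require.
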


On a two-dimensional latent domain, let \(z_{ij}=(z_{1,i},z_{2,j})\) be a grid and let \(\Phi_{ij}^n\) approximate \(\Phi(z_{ij},t_n)\). A basic explicit update is
\begin{equation}
    \Phi_{ij}^{n+1}
    =\Phi_{ij}^{n}-\Delta t\,M_{ij}\,|\nabla_h\Phi_{ij}^n|,
    \label{eq:basic-level-set-scheme}
\end{equation}
where \(|\nabla_h\Phi|\) is computed using an upwind or Godunov numerical Hamiltonian \citep{Sethian1999}. A stability condition of the form
\begin{equation}
    \Delta t\le C\frac{\Delta z}{\max_{ij}|M_{ij}|}
    \label{eq:cfl}
\end{equation}
is typically required, where \(C\le1\) depends on the scheme and grid geometry.

Higher-order ENO or WENO discretizations can reduce numerical diffusion. Reinitialization may be used to maintain \(\Phi\) close to a signed-distance function:
\begin{equation}
    \Phi_s=\operatorname{sign}(\Phi_0)(1-|\nabla\Phi|),
    \label{eq:reinit}
\end{equation}
whose steady state satisfies \(|\nabla\Phi|=1\).

\begin{remark}[Full-grid and narrow-band costs]
On a \(d\)-dimensional latent grid with \(N\) points per axis, a full-grid contour calculation evaluates \(M\) at \(N^d\) points, each at cost \(C_M\) (one surface decode and one no-arbitrage grid check). A narrow-band level-set method stores and updates \(O(N^{d-1})\) active nodes per step, with cumulative work \(O(N^{d-1}N_t)\) for \(N_t\) time steps. Under an explicit CFL condition, a fixed physical time generally gives \(N_t=O(N)\). Hence a narrow band does not provide an unconditional asymptotic reduction relative to one \(O(N^d)\) contour calculation, and repeated updates can cost more. Direct contouring is sufficient for a static two-dimensional margin field. An evolving interface, local refinement, or restricted active memory can instead favor a level-set calculation. The total number of margin evaluations depends on travel distance, caching, initialization, and the boundary components sought.
\end{remark}

\section{Toy Experiments}
\label{sec:examples}

The toy experiments separate three roles of the latent margin: the motion of a level-set front, the induction of a financial constraint by a generator, and the interaction of several active constraints. We first use the examples with known boundaries to assess numerical reconstruction and then examine the admissible geometry of a trained generator.

\subsection{Example 1: radial admissible set}

\begin{example}[Radial margin]
Let \(\Z=\R^2\) and define
\begin{equation}
    M(z_1,z_2)=1-z_1^2-z_2^2.
    \label{eq:radial-margin}
\end{equation}
Then
\[
    \A=\{(z_1,z_2):z_1^2+z_2^2\le1\},
\]
and
\[
    \Gamma=\Sigma=\{(z_1,z_2):z_1^2+z_2^2=1\}.
\]
The level-set PDE is
\begin{equation}
    \Phi_t+(1-z_1^2-z_2^2)|\nabla\Phi|=0.
    \label{eq:radial-hj}
\end{equation}
\end{example}

If the initial interface is radial,
\[
    \Phi(z,0)=\norm z-R_0,
    \qquad 0<R_0<1,
\]
and \(\Phi(z,t)=\norm z-R(t)\), then \(|\nabla\Phi|=1\) and \(\Phi_t=-R'(t)\). Evaluating the PDE at the interface \(\norm z=R(t)\) gives
\[
    -R'(t)+1-R(t)^2=0,
\]
so
\begin{equation}
    R'(t)=1-R(t)^2.
    \label{eq:R-ode}
\end{equation}
The solution is
\begin{equation}
    R(t)=\tanh\left(t+\operatorname{arctanh}(R_0)\right),
    \label{eq:R-solution}
\end{equation}
and therefore
\[
    R(t)\to1.
\]
The evolving zero level set converges to the no-arbitrage boundary \(\Gamma=\{M=0\}\).

\begin{figure}[ht]
\centering
\includegraphics[width=\linewidth]{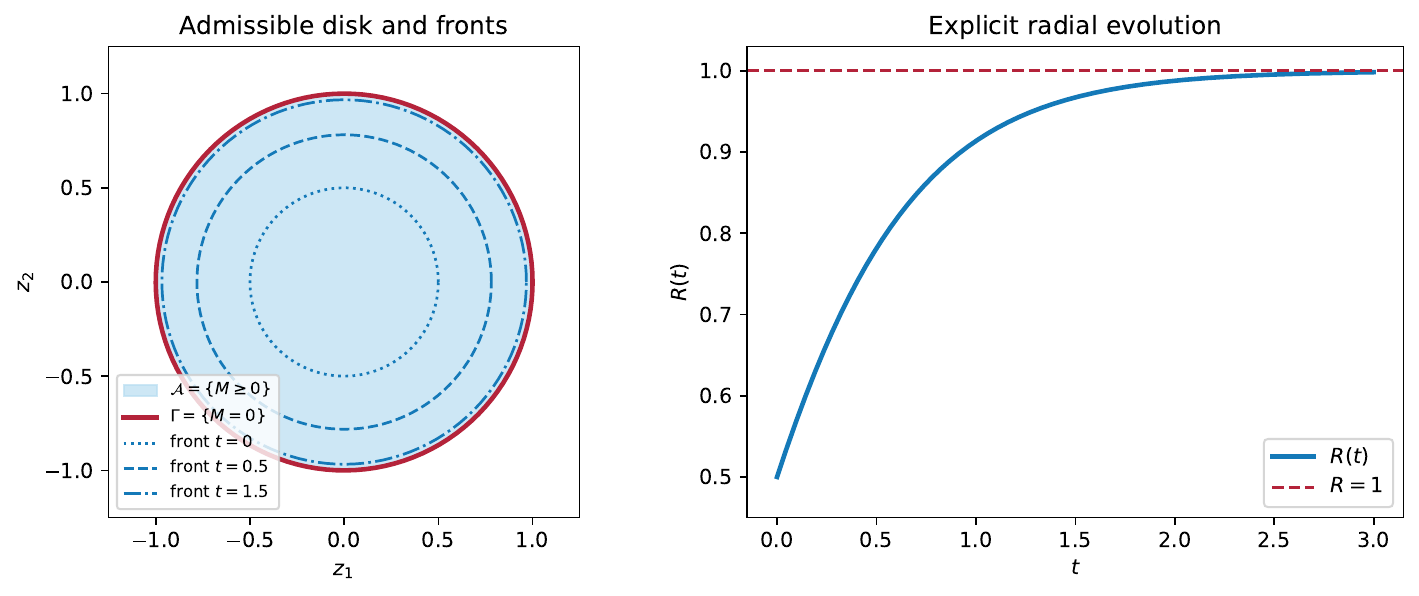}
\caption{Example 1: radial admissible set and level-set front evolution. \emph{Left}: the admissible disk \(\A=\{M\ge0\}\), its boundary \(\Gamma=\{M=0\}\), and snapshots of the front \(R(t)\to1\) starting from \(R_0=0.5\). \emph{Right}: the explicit front \(R(t)=\tanh(t+\operatorname{arctanh}R_0)\) converging to the stationary interface \(R=1\).}
\label{fig:radial}
\end{figure}

\subsection{Example 2: a generator-induced calendar margin}

This example derives the margin from a decoder-generated surface. The calendar condition is evaluated over the continuous surface domain rather than at selected points. The generated total variance remains positive on the latent box, while the calendar margin becomes negative in part of the latent tail.

Let
\begin{equation}
    \Omega=[-0.3,0.3]\times[0.1,0.6],
    \qquad
    \Z=[-6,6]^2.
\end{equation}
Consider a toy decoder that outputs total variance surfaces directly:
\begin{equation}
\begin{aligned}
    w_z(m,\tau)
    =\tau\Big[&0.08+0.004z_1
    +(0.01+0.015z_2)\tau \\
    &+0.30m^2+0.006z_1m\Big],
    \qquad z=(z_1,z_2)\in\Z.
\end{aligned}
\label{eq:numerical-toy-decoder}
\end{equation}
The corresponding implied volatility surface is
\begin{equation}
    \sigma_z(m,\tau)=\sqrt{\frac{w_z(m,\tau)}{\tau}}.
\end{equation}
The variable \(z_2\) controls the maturity dependence of the surface, while \(z_1\) affects the level and the skew-like term in \(m\). The calculation uses \(w_z\) directly because the conditions in Section~2 are expressed in total variance.

First note that the bracketed term in \eqref{eq:numerical-toy-decoder} is positive on \(\Omega\times\Z\). Indeed, the worst-case contribution from the quadratic expression in \(m\) is
\begin{equation}
    \inf_{m\in[-0.3,0.3]}\left(0.30m^2+0.006z_1m\right)
    =-\frac{(0.006z_1)^2}{4(0.30)}
    \ge -0.00108,
\end{equation}
where the unconstrained minimizer lies inside \([-0.3,0.3]\) for \(z_1\in[-6,6]\). Hence, using \(z_1=-6\), \(z_2=-6\), and \(\tau=0.6\) gives the lower bound
\begin{equation}
    0.08-0.024+(0.01-0.09)(0.6)-0.00108
    =0.00692>0.
\end{equation}
Thus the generated surfaces are well-defined as total variance surfaces throughout the latent box.

The calendar-spread derivative is
\begin{equation}
\begin{aligned}
    \partial_\tau w_z(m,\tau)
    =&\;0.08+0.004z_1
    +2(0.01+0.015z_2)\tau \\
    &+0.30m^2+0.006z_1m.
\end{aligned}
\label{eq:numerical-calendar-derivative}
\end{equation}
The decoder-induced calendar margin is therefore
\begin{equation}
    M_{\mathrm{cal}}(z)
    :=\inf_{(m,\tau)\in\Omega}\partial_\tau w_z(m,\tau).
    \label{eq:numerical-calendar-margin}
\end{equation}
The condition \(M_{\mathrm{cal}}(z)\ge0\) requires calendar admissibility for every \((m,\tau)\in\Omega\).

The infimum can be computed explicitly. The \(m\)-dependent part is convex, and its minimum is
\begin{equation}
    \inf_{m\in[-0.3,0.3]}\left(0.30m^2+0.006z_1m\right)
    =-0.00003z_1^2.
\end{equation}
The \(\tau\)-dependent part is linear. If \(0.01+0.015z_2\ge0\), equivalently \(z_2\ge-2/3\), the minimum over \(\tau\in[0.1,0.6]\) occurs at \(\tau=0.1\). Otherwise it occurs at \(\tau=0.6\). Consequently,
\begin{equation}
    M_{\mathrm{cal}}(z)=
    \begin{cases}
    0.082+0.004z_1+0.003z_2-0.00003z_1^2,
    & z_2\ge -2/3,\\[4pt]
    0.092+0.004z_1+0.018z_2-0.00003z_1^2,
    & z_2<-2/3.
    \end{cases}
    \label{eq:numerical-calendar-margin-explicit}
\end{equation}
The calendar-admissible set induced by the decoder is
\begin{equation}
    \A_{\mathrm{cal}}
    =\{z\in\Z:M_{\mathrm{cal}}(z)\ge0\}.
\end{equation}
In the upper region \(z_2\ge-2/3\), the margin is strictly positive throughout \(\Z\). For example, at the worst corner \((z_1,z_2)=(-6,-2/3)\),
\begin{equation}
    M_{\mathrm{cal}}(-6,-2/3)
    =0.082-0.024-0.002-0.00108>0.
\end{equation}
Therefore all latent codes with \(z_2\ge-2/3\) are calendar-admissible in this example.

In the lower region \(z_2<-2/3\), the boundary is obtained from \(M_{\mathrm{cal}}(z)=0\):
\begin{equation}
    z_2
    =-5.1111-0.2222z_1+0.001667z_1^2.
    \label{eq:numerical-calendar-boundary}
\end{equation}
Thus the full calendar-admissible region is
\begin{equation}
\begin{aligned}
    \A_{\mathrm{cal}}
    ={}&\{z\in[-6,6]^2:z_2\ge-2/3\} \\
    &\cup
    \left\{z\in[-6,6]^2:
    z_2<-2/3,
    \;z_2\ge -5.1111-0.2222z_1+0.001667z_1^2
    \right\}.
\end{aligned}
\label{eq:numerical-calendar-admissible-set}
\end{equation}
The violation region is the part of the latent box below this zero-level boundary:
\begin{equation}
    \V_{\mathrm{cal}}
    =\left\{z\in[-6,6]^2:
    z_2<-2/3,
    \;z_2< -5.1111-0.2222z_1+0.001667z_1^2
    \right\}.
\end{equation}
Within the box \([-6,6]^2\), the boundary enters through the left edge at approximately
\begin{equation}
    (-6,-3.7178)
\end{equation}
 and exits through the lower edge at approximately
\begin{equation}
    (4.1278,-6).
\end{equation}
Hence the central latent region remains strictly calendar-admissible, while a lower-tail region crosses the zero-margin boundary.

\begin{figure}[ht]
\centering
\includegraphics[width=0.72\linewidth]{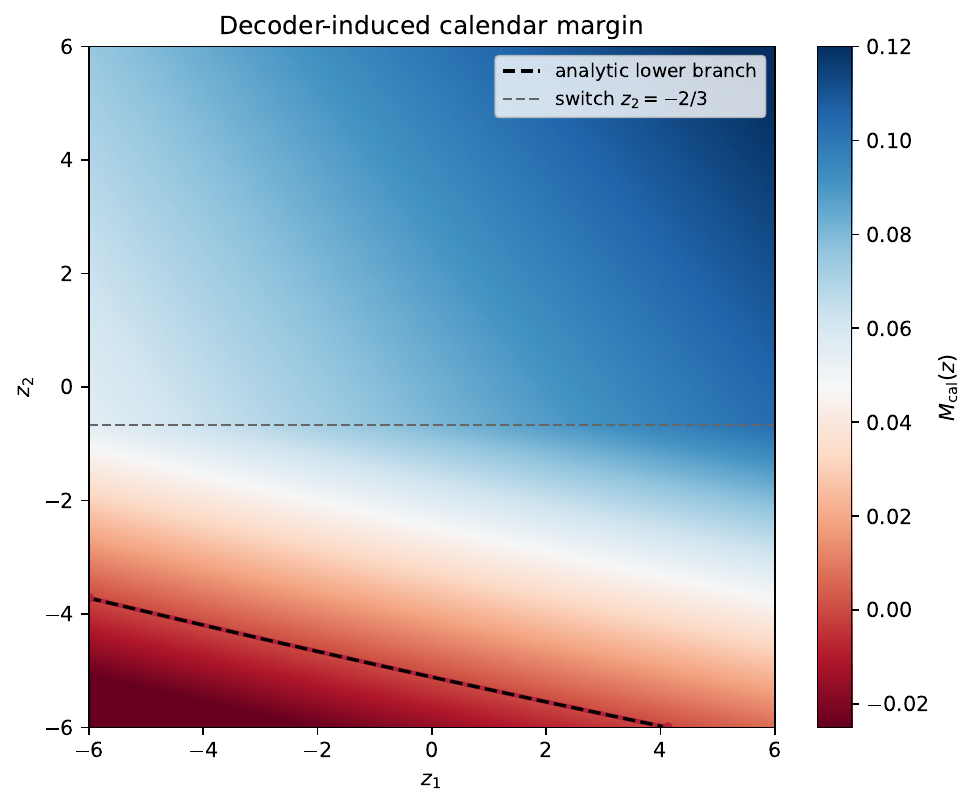}
\caption{Example 2: generator-induced calendar margin \(M_{\mathrm{cal}}(z)\) on the latent box \(\Z=[-6,6]^2\). Shading shows the margin; the red contour is the zero level set \(\partial\A_{\mathrm{cal}}\), and the dashed black curve is the analytic lower boundary branch from \eqref{eq:numerical-calendar-boundary}. The boundary enters the box at \((-6,-3.7178)\) and exits at \((4.1278,-6)\); the dashed gray line marks the switch at \(z_2=-2/3\).}
\label{fig:calendar}
\end{figure}

Table~\ref{tab:numerical-calendar-points} reports \(M_{\mathrm{cal}}\) for representative latent codes.

\begin{table}[h]
\centering
\caption{Representative latent codes for the generator-induced calendar margin in Example~2.}
\label{tab:numerical-calendar-points}
\begin{tabular}{cccc}
\toprule
\(z=(z_1,z_2)\) & \(0.01+0.015z_2\) & \(M_{\mathrm{cal}}(z)\) & Classification \\
\midrule
\((0,0)\)      & \(0.010\)  & \(0.08200\)  & strictly admissible \\
\((2,-2)\)    & \(-0.020\) & \(0.06388\)  & admissible \\
\((0,-5)\)    & \(-0.065\) & \(0.00200\)  & close to boundary \\
\((-6,-4)\)  & \(-0.050\) & \(-0.00508\) & violation \\
\((0,-6)\)    & \(-0.080\) & \(-0.01600\) & violation \\
\((6,-6)\)    & \(-0.080\) & \(0.00692\)  & admissible \\
\bottomrule
\end{tabular}
\end{table}

\paragraph{Prior mass and extrapolation probabilities.}
Let \(z\sim N(0,I_2)\) and restrict the decoder analysis to \(\Z=[-6,6]^2\). The regions \(\norm{z}\le2\) and \(\norm{z}\le3\) contain no violations for this decoder and contain respectively \(86.5\%\) and \(98.9\%\) of the Gaussian mass. A 200-node Gauss--Hermite quadrature of the analytic boundary gives the joint probability
\[
    \mathbb P\bigl(M_{\mathrm{cal}}(z)<0,\ z\in\Z\bigr)
    =3.07\times10^{-7};
\]
a fixed-seed Monte Carlo run with five million draws produced three violations, or \(6.0\times10^{-7}\) with standard error \(3.5\times10^{-7}\). The two estimates are consistent at the resolution of such a rare-event simulation.

For fixed \(z_2=c\), Table~\ref{tab:control-extrapolation} reports the analytic one-dimensional Gaussian probabilities obtained from the boundary in \eqref{eq:numerical-calendar-boundary}. They are zero for \(c\ge-3\) on the latent box, remain negligible where the boundary first enters the box, and rise sharply as \(c\) moves farther into the lower tail. These probabilities are specific to the decoder in \eqref{eq:numerical-toy-decoder}.

\begin{table}[ht]
\centering
\caption{Calendar-violation probabilities under control extrapolation in Example~2. The probability is over \(z_1\sim N(0,1)\), with the event restricted to \(z_1\in[-6,6]\) and \(z_2=c\) fixed.}
\label{tab:control-extrapolation}
\begin{tabular}{cccccc}
\toprule
\(c\) & \(-3.72\) & \(-4.00\) & \(-4.50\) & \(-5.00\) & \(-5.50\) \\
\midrule
\(\mathbb P(M_{\mathrm{cal}}(z_1,c)<0)\)
& \(5.7\times10^{-11}\) & \(7.0\times10^{-7}\) & \(3.5\times10^{-3}\) & \(3.09\times10^{-1}\) & \(9.62\times10^{-1}\) \\
\bottomrule
\end{tabular}
\end{table}

The mappings form the chain
\begin{equation}
    z\longmapsto w_z(m,\tau)
    \longmapsto \partial_\tau w_z(m,\tau)
    \longmapsto M_{\mathrm{cal}}(z)
    \longmapsto \A_{\mathrm{cal}}.
\end{equation}
The calendar condition is imposed on \(w_z\), and its minimum over \(\Omega\) defines the latent-space super-level set \(\A_{\mathrm{cal}}\).

\subsection{Numerical reconstruction of analytic boundaries}

We test the numerical level-set calculation on Examples~1 and 2, for which the relevant boundary components are available analytically. We use the first-order upwind scheme of Section~\ref{sec:numerical} on the full Cartesian grid, with the Godunov numerical Hamiltonian, homogeneous-Neumann ghost differences, and reinitialization every \(30\) steps. The iteration stops when the margin on cells crossed by the zero level set falls below a tolerance of order \(\norm{\nabla M}\,h\). Proposition~\ref{prop:front-convergence} gives this scale only for fixed-normal local dynamics and does not establish convergence of the discretized front.

For Example~1, halving \(h\) from \(4.0\times10^{-2}\) to \(2.0\times10^{-2}\) and \(1.0\times10^{-2}\) reduces the maximum radial residual from \(5.6\times10^{-2}\) to \(2.5\times10^{-2}\) and \(1.2\times10^{-2}\). On the finest grid, \(R_{\mathrm{num}}(1)=0.9129\), compared with \(R_{\mathrm{ana}}(1)=0.9137\). For Example~2, the maximum vertical residual decreases from \(1.7\times10^{-1}\) to \(1.0\times10^{-1}\) and \(4.8\times10^{-2}\) over the three grids. Both sequences are consistent with first-order refinement.

The Example~1 residual is \(|\|z\|-1|\) over all extracted zero crossings. For Example~2, the residual is \(|z_2-b(z_1)|\) for extracted points in \(-5.9<z_1<4.0\) and \(-5.9<z_2<-1.0\); it is not a symmetric Hausdorff distance. Figure~\ref{fig:levelset-ex2} compares this selected boundary component with the analytic lower branch in \eqref{eq:numerical-calendar-boundary}. All six calculations satisfy the stopping criterion. On the finest Example~2 grid, the final front band contains \(2{,}060\) of \(231{,}361\) nodes, or \(0.89\%\). This fraction records front localization; updates, margin evaluations, and memory allocation remain full-grid.

\begin{table}[ht]
\centering
\caption{Full-grid level-set reconstruction of the radial boundary in Example~1 and the selected calendar-admissibility boundary in Example~2. Residuals use the case-specific definitions and comparison window stated in the text. The front-band count reports localization within the full grid.}
\label{tab:levelset}
\begin{tabular}{lccccc}
\toprule
Case & \(N\) & \(h\) & max residual & mean residual & front band / grid \\
\midrule
Example 1 & 101 & \(4.0\times10^{-2}\) & \(5.6\times10^{-2}\) & \(5.0\times10^{-2}\) & 697 / 10{,}201 \\
Example 1 & 201 & \(2.0\times10^{-2}\) & \(2.5\times10^{-2}\) & \(2.4\times10^{-2}\) & 1{,}427 / 40{,}401 \\
Example 1 & 401 & \(1.0\times10^{-2}\) & \(1.2\times10^{-2}\) & \(1.2\times10^{-2}\) & 2{,}905 / 160{,}801 \\
Example 2 & 121 & \(1.0\times10^{-1}\) & \(1.7\times10^{-1}\) & \(9.9\times10^{-2}\) & 541 / 14{,}641 \\
Example 2 & 241 & \(5.0\times10^{-2}\) & \(1.0\times10^{-1}\) & \(6.3\times10^{-2}\) & 1{,}053 / 58{,}081 \\
Example 2 & 481 & \(2.5\times10^{-2}\) & \(4.8\times10^{-2}\) & \(3.1\times10^{-2}\) & 2{,}060 / 231{,}361 \\
\bottomrule
\end{tabular}
\end{table}

\begin{figure}[ht]
\centering
\includegraphics[width=0.62\linewidth]{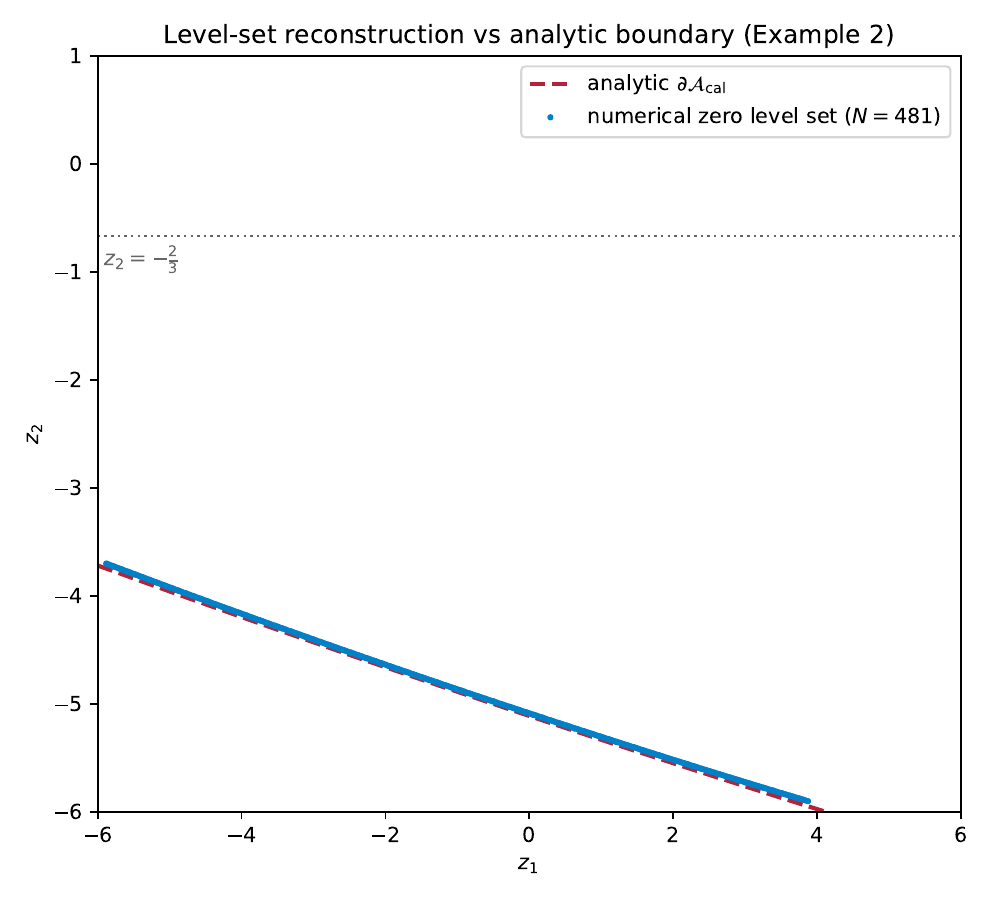}
\caption{Level-set reconstruction of the calendar boundary in Example~2 on the \(481\times481\) grid. Blue dots: numerical zero level set; dashed red curve: analytic lower branch \eqref{eq:numerical-calendar-boundary}; dotted line: switch \(z_2=-2/3\).}
\label{fig:levelset-ex2}
\end{figure}
\clearpage

\subsection{Example 3: competing calendar and butterfly constraints}

The combined admissible set is generally the intersection of multiple constraint-induced super-level sets. Consider the stylized margins
\begin{equation}
    M_{\mathrm{cal}}(z)=1-z_2^2,
    \qquad
    M_{\mathrm{bf}}(z)=1-z_1^2-\alpha z_2^2,
    \label{eq:competing-margins}
\end{equation}
where \(\alpha>0\). Let
\begin{equation}
    M(z)=\min\{M_{\mathrm{cal}}(z),M_{\mathrm{bf}}(z)\}.
\end{equation}
Then
\[
    \A=\{M_{\mathrm{cal}}\ge0\}\cap\{M_{\mathrm{bf}}\ge0\}.
\]
That is,
\begin{equation}
    \A=\{ |z_2|\le1\}\cap\{z_1^2+\alpha z_2^2\le1\}.
\end{equation}
The boundary is the union of active constraint boundaries:
\[
    \Gamma\subseteq\{M_{\mathrm{cal}}=0\}\cup\{M_{\mathrm{bf}}=0\}.
\]
The admissible set is the intersection of two convex sets and is therefore convex, but the active constraints depend on \(\alpha\). If \(\alpha\ge1\), the ellipse is contained in the strip, so the butterfly constraint alone determines the boundary (at \(\alpha=1\) the two boundaries touch at \((0,\pm1)\)). If \(0<\alpha<1\), both constraints are active: the ellipse intersects the lines \(z_2=\pm1\) at
\[
    (z_1,z_2)=\bigl(\pm\sqrt{1-\alpha},\,\pm1\bigr),
\]
and these four switching points are corners of \(\partial\A\). Thus the minimum of the two smooth margins is generally only Lipschitz across the switching set \(\{M_{\mathrm{cal}}=M_{\mathrm{bf}}\}\), while boundary nonsmoothness occurs when that set meets two active boundary pieces transversally. These stylized margins are not derived from a generator-produced surface.

\begin{figure}[ht]
\centering
\includegraphics[width=0.82\linewidth]{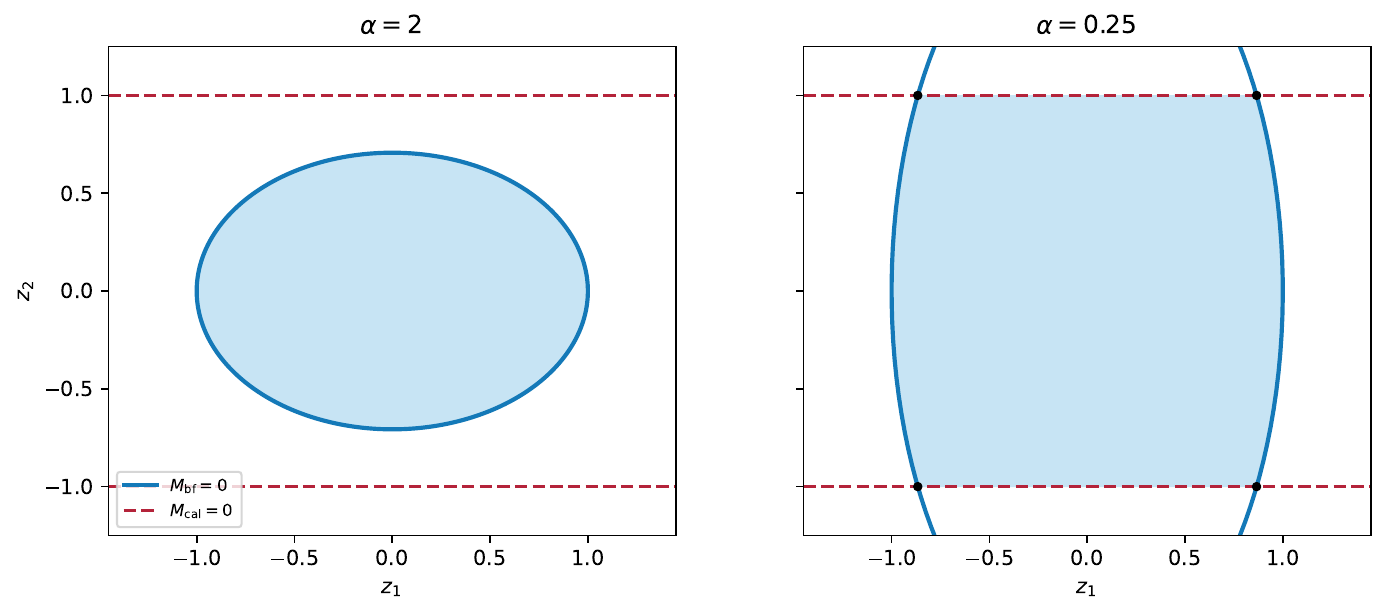}
\caption{Example 3: competing calendar and butterfly constraints. \emph{Left} (\(\alpha=2\)): the butterfly constraint dominates and the strip constraint is inactive. \emph{Right} (\(\alpha=1/4\)): both constraints shape the boundary \(\partial\A\), with switching points where \(M_{\mathrm{cal}}=M_{\mathrm{bf}}=0\).}
\label{fig:competing}
\end{figure}
\clearpage

Example~3 shows how several constraints can determine different parts of an admissible boundary. We next examine which constraints are active on the boundary induced by a trained generator, without assuming that its geometry has the stylized form above.

\section{Admissible latent geometry of Generative Financial Models} 
\label{sec:implications}
We examine the admissible latent geometry induced by a two-dimensional VAE trained on synthetic Heston implied volatility surfaces. The study compares reconstruction error, held-out admissibility, admissible area, and prior probability across training seeds.

\subsection{No-arbitrage geometry of a trained generator}

 We generate 30,000 Heston implied volatility surfaces on a grid of 28 maturities and 28 log-moneyness points. The maturities range from 0.1 to 0.6 years, and the log-moneyness values range from \(-0.27\) to \(0.27\). Each surface is converted to normalized Black--Scholes call prices with zero interest rates and dividends. We check the lower and upper call-price bounds, the two strike-slope bounds, strike convexity, and calendar monotonicity. The discrete margin is the minimum of the six constraint values over all checked grid locations. All input surfaces have positive margin, with a minimum value of \(2.79\times10^{-8}\).

We train a fully connected VAE with a two-dimensional Gaussian latent variable. The data are divided into training, validation, and test sets in proportions of 70\%, 15\%, and 15\%. The architecture and KL weight are selected using the validation set, while the test set is used only after model selection. The selected model has encoder widths 784, 512, 256, and 128, a symmetric decoder, SiLU activations, a sigmoid output, and a KL weight of 0.3 after warm-up. Five models are trained with different random seeds. For the latent-space calculations, the VAE decoder is treated as the fixed generator, and posterior means are used to represent the data in latent space.

For each seed, we evaluate the margin directly on a \(161\times161\) grid over \([-4,4]^2\). The area of the admissible region is estimated from the grid cells. Its probability under the standard Gaussian prior is estimated using 200,000 fixed-seed samples, with a Monte Carlo standard error below \(6.1\times10^{-4}\). Grid edges across which the margin changes sign are refined by bisection to a latent-coordinate tolerance of \(10^{-5}\). This static contour calculation does not use the time-dependent Hamilton--Jacobi evolution tested above.

The selected architecture reduces the mean test IV-RMSE from 0.01017 for the baseline VAE to 0.00646. Across the five selected-model seeds, the IV-RMSE varies only from 0.00639 to 0.00653. The mean fraction of strictly admissible held-out decodings is 0.974, with a range from 0.927 to 0.990. By comparison, the admissible fraction of the latent box ranges from 0.334 to 0.421, whereas the admissible probability under the Gaussian prior ranges from 0.921 to 0.981. The held-out codes and prior samples are concentrated near the central admissible region, while the uniform box measure assigns more weight to the outer region.

Seed 3 has a test IV-RMSE of 0.00645, but its held-out admissible fraction, box-area fraction, and prior admissibility are the smallest among the five seeds, at 0.927, 0.334, and 0.921. Reconstruction error therefore does not determine the no-arbitrage properties of the generator. Strike convexity is active at 2,491 of the 2,498 refined boundary points; the lower call-price bound is active at the remaining seven points. Within the checked grid on \([-4,4]^2\), each seed produces one visible admissible contour, and its boundary is determined almost entirely by strike convexity. These are finite-grid statements for the six checked call-price conditions.

\begin{table}[ht]
\centering
\caption{Reconstruction and admissibility results for the Heston VAE. The reported statistics are computed across five training seeds.}
\label{tab:heston-vae}
\begin{tabular}{lccc}
\toprule
Quantity & Mean & Standard deviation & Range \\
\midrule
Test IV-RMSE & 0.00646 & 0.00005 & 0.00639--0.00653 \\
Strictly admissible test fraction & 0.974 & 0.026 & 0.927--0.990 \\
Admissible fraction of $[-4,4]^2$ & 0.388 & 0.032 & 0.334--0.421 \\
Estimated prior admissibility & 0.966 & 0.025 & 0.921--0.981 \\
Strict boundary samples & 500 & 23 & 466--528 \\
Strike-convexity boundary share & 0.997 & 0.003 & 0.994--1.000 \\
\bottomrule
\end{tabular}
\end{table}

\begin{figure}[ht]
\centering
\includegraphics[width=\linewidth]{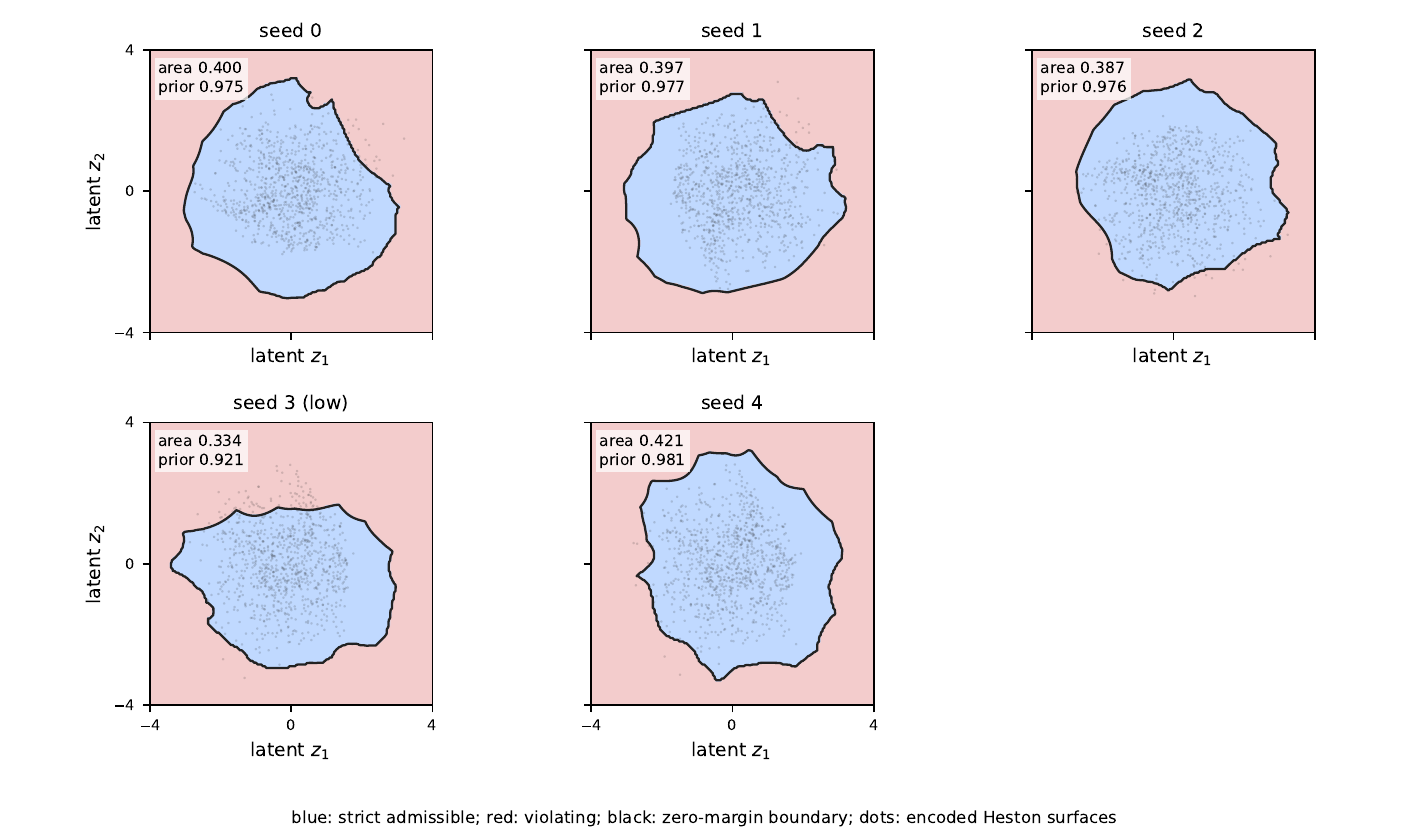}
\caption{Admissible latent regions for the Heston VAE on \([-4,4]^2\). Blue indicates strict admissibility of the generated surface, red indicates violation of at least one checked condition, and the black curve is the zero-margin boundary. Gray points are encoded Heston surfaces. The labels report the box-area fraction and the admissible probability under the standard Gaussian prior for each seed.}
\label{fig:heston-vae-geometry}
\end{figure}

\subsection{Latent-space stability}

Let \(Q\) denote a sampling distribution on latent space. The probability that the generator produces an admissible surface is \(Q(\A)\). This probability depends on both the admissible set and the mass assigned to it by \(Q\); it is different from the area of \(\A\) under a uniform measure on a bounded box. For example, the latent flow-matching study of \citet{BrooksBajalicaLiuBenTahar2026} reports that \(90.8\%\) of its generated surfaces satisfy the tested no-arbitrage conditions. That percentage estimates an admissible probability for the learned flow rather than the geometric size of its admissible set. The same distinction applies to generators based on admissible parameterizations, penalty terms, or post-generation correction \citep{ContVuletic2023,Ning2023,AckererTagasovskaVatter2020,NaZhangWan2023,BelmonteCole2025,WangLiuVuik2025}.

A global no-arbitrage guarantee requires
\[
    M(z)\ge0\qquad\text{for all } z\in\Z.
\]
Local stability concerns a neighborhood of one strictly admissible code:
\[
    M(z_0)>0
    \quad\Longrightarrow\quad
    \exists\delta>0\text{ such that }B_\delta(z_0)\cap\Z\subset\A.
\]
This implication does not determine the sign of the margin in either the central or the tail region of a latent distribution.

For a Gaussian prior, a compact truncation is required to apply the compactness statement in Theorem~\ref{thm:topology}. If a central region lies in \(\A_+\), its points are locally stable. The margin in the tails is determined by the trained generator and cannot be inferred from Gaussianity. We distinguish two sampling regions:
\begin{enumerate}[label=(\roman*)]
    \item central sampling, where margins remain positive;
    \item latent-tail sampling, where the generator is queried in low-density regions.
\end{enumerate}

For the Heston VAE, the mean admissible fraction of the fixed latent box is about 39\%, while the mean admissible probability under the Gaussian prior is about 97\%. The prior assigns most of its mass to the central admissible region. Seed 3 has one of the lowest test IV-RMSE values but the smallest admissible region in Table~\ref{tab:heston-vae}. Reconstruction error and admissible-set size therefore measure different properties of the trained generator.

Model comparisons can therefore report reconstruction error together with the admissible area on a fixed latent box and the prior probability assigned to \(\A\). Further quantities include the active constraints along the boundary and the distance of encoded data points from \(\Gamma\). These measurements describe no-arbitrage properties that are not contained in IV-RMSE.

For seed 0, we select 40 Gaussian-prior codes with \(M(z)\le-10^{-2}\) from the most negative sampled margins. Strike convexity is active at all 40 initial codes. Local ascent steps in the active finite-grid margin move 33 codes to the target margin \(2\times10^{-5}\). Across all paths, the median displacement is 0.872 and the median final margin is \(6.98\times10^{-5}\). At the endpoints, the lower call-price bound is the active constraint for 24 codes, calendar monotonicity for 9 codes, and strike convexity for 7 codes. The change in active constraints shows that a direction computed from the gradient of the currently active constraint need not remain aligned with the margin along the full path. The seven unsuccessful paths also start far from the boundary. This calculation does not solve the global projection problem in \eqref{eq:projection}.

\begin{figure}[!htbp]
\centering
\includegraphics[width=0.70\linewidth]{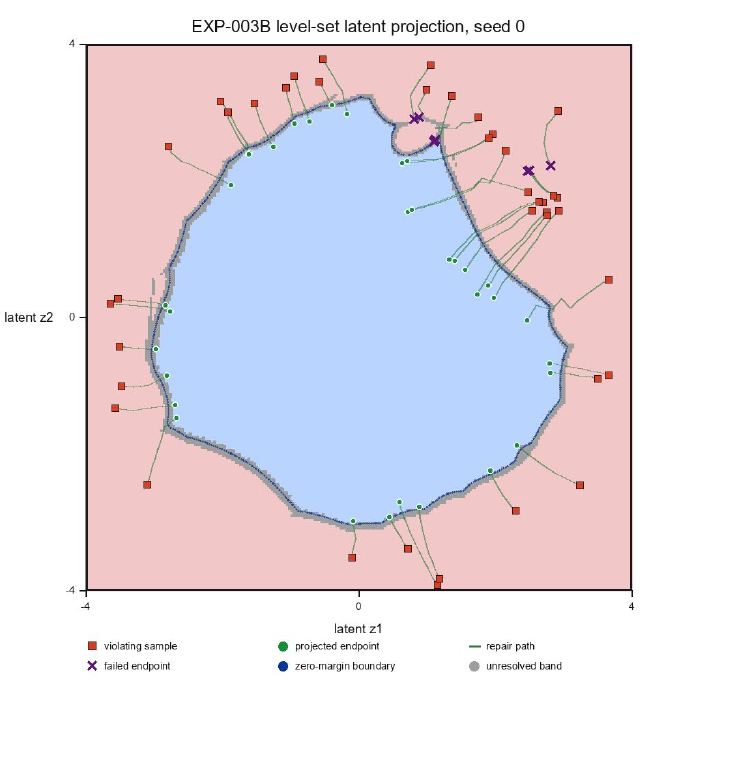}
\caption{Local correction of 40 latent codes with \(M(z)\le-10^{-2}\) for the Heston VAE, seed 0. Orange squares are the initial codes, green curves are local margin-normal paths, green points are successful endpoints, purple crosses are unsuccessful endpoints, the blue curve is the refined zero-margin boundary, and gray marks cells for which the adaptive calculation is unresolved.}
\label{fig:heston-vae-repair}
\end{figure}

If \(z\notin\A\) and \(\A\) is nonempty and closed in \(\R^d\), a natural repair problem is Euclidean projection onto the admissible set:
\begin{equation}
    \Pi_{\A}(z)
    \in\arg\min_{\tilde z\in\A}\norm{\tilde z-z}^2.
    \label{eq:projection}
\end{equation}
Equivalently,
\begin{equation}
\begin{aligned}
    \min_{\tilde z\in\Z}\quad &\norm{\tilde z-z}^2 \\
    \text{subject to}\quad & M(\tilde z)\ge0.
\end{aligned}
\label{eq:constrained-repair}
\end{equation}
In finite dimensions, nonemptiness and closedness ensure that the minimum is attained; if \(\A\) is nonconvex, the projection may nevertheless be nonunique. The level-set representation supplies the constraint geometry for this optimization problem, but it does not by itself provide a projection algorithm. In computations, \(M\) and hence the constraint are evaluated only to numerical tolerance.

\subsection{Limitations and scope}

For a trained neural generator, the margin \(M=\M\circ g\) is generally evaluated numerically. The computed boundary depends on the surface grid, derivative stencil, latent grid, and numerical tolerances. Although a non-degenerate boundary is locally smooth, the full boundary may be nonsmooth or have several connected components. The computational cost also increases with the latent dimension. Narrow-band and adaptive methods reduce active memory or improve local resolution, but they do not remove this dimensional dependence.

Local stability does not imply a global no-arbitrage guarantee. A positive margin on a discrete surface grid also does not guarantee no-arbitrage on the continuous strike--maturity domain without an approximation bound between the two formulations. The Heston results show additional variation across training seeds. Finally, if \(g(z)\) is the conditional mean of a stochastic decoder, admissibility of \(g(z)\) does not imply admissibility of every draw from \(p_\theta(\dd w\mid z)\), since the margin is nonlinear in the generated surface.

\section{Conclusion}

We studied the latent codes for which a deterministic generator produces implied volatility surfaces satisfying static no-arbitrage conditions. A scalar margin defines the admissible set and its boundary. Continuity gives local stability of strictly admissible codes, and non-degeneracy relates the boundary locally to the zero-margin set. For regular boundary components, a Hamilton--Jacobi level-set equation provides an evolving-interface calculation with local attraction toward the zero-margin set. The full-grid experiments recover the selected boundaries in the two analytic examples.

The Heston VAE produces a central admissible region for each of the five training seeds. Similar test reconstruction errors are accompanied by larger differences in admissible area and prior probability, and strike convexity determines almost all refined boundary points. The local correction experiment moves 33 of 40 violating codes across the computed boundary. These results compare reconstruction accuracy with the no-arbitrage properties of the generated surfaces.

For futher work, higher-dimensional latent spaces require more efficient boundary calculations, while a continuous-domain guarantee requires error estimates relating the discrete and continuous margins. Extensions to stochastic generators may require a probabilistic definition of admissibility.

\appendix

\section{Static No-Arbitrage Conditions for Total Variance Surfaces}

This appendix records the finite-domain conditions used in the main text. Let
\[
    w(m,\tau)=\tau\sigma^2(m,\tau).
\]
Calendar-spread no-arbitrage is expressed as
\[
    \partial_\tau w(m,\tau)\ge0.
\]
At fixed maturity \(\tau\), define
\[
    d_{\pm}(m;\tau)
    =-\frac{m}{\sqrt{w(m,\tau)}}\pm\frac{\sqrt{w(m,\tau)}}{2}.
\]
The Black-Scholes call price expressed in log-moneyness and total variance is
\[
    C_{BS}(m,w(m,\tau))
    =S_0\left(N(d_+(m;\tau))-e^mN(d_-(m;\tau))\right).
\]
A commonly used finite-domain butterfly condition is
\[
    \Bop[w](m,\tau)\ge0,
\]
where \(\Bop[w]\) is defined in \eqref{eq:butterfly-operator}. On an unbounded strike domain, additional tail conditions are needed. On a finite computational domain, one typically evaluates \(\Bop[w]\) at interior grid points and treats boundary behavior separately.

\section{Proofs and Additional Stability Results}
\label{app:proofs}

\subsection{Continuity of discrete margins}

Let \(\X_h=\R^N\) be a finite-dimensional surface representation. The finite-difference operators \(D_\tau^h\), \(D_m^h\), and \(D_{mm}^h\) are linear maps on \(\X_h\). If denominators such as \(w\) are bounded away from zero, the discrete butterfly operator \(\Bop_h\) is continuous. The minimum over finitely many grid points preserves continuity. Therefore, the discrete margin \(\M_h\) is continuous on any subset where \(w\ge c>0\). If the decoder is continuous, then \(M_h=\M_h\circ g\) is continuous.

\subsection{Closedness of tolerance-adjusted admissible sets}

For \(\eps\ge0\), define
\[
    \A_h^\eps=\{z:M_h(z)\ge-\eps\}.
\]
If \(M_h\) is continuous, then \(\A_h^\eps\) is closed. If \(M_h(z_0)>-\eps\), then \(z_0\) is locally stable with respect to the tolerance-adjusted criterion.

\section{Viscosity Solutions and Hamilton-Jacobi Equations}
\label{app:viscosity}

On \(\R^d\), or on a periodic domain, the level-set PDE
\[
    \Phi_t+H(z,\nabla\Phi)=0,
    \qquad H(z,p)=M(z)|p|,
\]
is a first-order Hamilton-Jacobi equation. Classical solutions may fail because gradients can become discontinuous. Viscosity solutions provide a weak solution concept that preserves comparison and stability. The standard theory implies uniqueness if a comparison principle holds. For bounded uniformly continuous initial data and continuous Hamiltonians satisfying suitable growth and continuity conditions, the viscosity solution exists and is stable under locally uniform perturbations of the Hamiltonian. This permits approximation of the margin field by finite differences or local surrogate evaluations.

\section{Numerical Level-Set Schemes}
\label{sec:numerical}

\subsection{Upwind discretization}

For \(\Phi_t+M(z)|\nabla\Phi|=0\), the sign of \(M\) determines the upwind direction. A monotone Godunov discretization is preferred over central differences. In two dimensions, one computes one-sided differences
\[
    D_{z_1}^-\Phi,
    \quad D_{z_1}^+\Phi,
    \quad D_{z_2}^-\Phi,
    \quad D_{z_2}^+\Phi,
\]
and combines them according to the sign of \(M_{ij}\). This prevents spurious oscillations near the interface.

\subsection{Reinitialization}

During evolution, \(\Phi\) may cease to be a signed-distance function. Reinitialization evolves
\[
    \Phi_s=\operatorname{sign}(\Phi_0)(1-|\nabla\Phi|)
\]
toward a steady state satisfying \(|\nabla\Phi|=1\). Care is needed because reinitialization can slightly move the zero level set if implemented aggressively.

\subsection{Narrow-band strategy}

A narrow-band method restricts updates and margin evaluations to points satisfying
\[
    |\Phi(z,t)|\le r_{\mathrm{band}}.
\]
When evaluation of \(M(z)=\M(g(z))\) requires decoding a full surface and computing its no-arbitrage margins, the narrow band restricts these evaluations to a neighborhood of the estimated boundary \(\Gamma\).

\subsection{Adaptive refinement}

Adaptive refinement can be guided by sign changes of \(M\), large gradients of \(\Phi\), or near-zero margin values. In two-dimensional latent spaces, marching-squares contour extraction can recover the approximate zero contour. In higher dimensions, one may use local charts, sparse grids, active subspaces, or reduced-dimensional slices.

\end{document}